
\documentclass{article}

\usepackage{microtype}
\usepackage{graphicx}
\usepackage{subcaption}
\usepackage{booktabs} 

\usepackage{hyperref}


\newcommand{\version}{preprint}
\usepackage[\version]{icml2026}



\usepackage{xifthen}
\usepackage{amsmath}
\usepackage{amssymb}
\usepackage{mathtools}
\usepackage{amsthm}
\usepackage{delimset}
\usepackage{enumitem}
\usepackage{tikz}

\allowdisplaybreaks

\newcommand{\Payoff}{C }

\newcommand{\Utility}{U }
\newcommand{\utility}{u }
\newcommand{\State}{S }
\newcommand{\state}{s }

\newcommand{\Action}{A }

\newcommand{\action}{a }
\newcommand{\Horizon}{H }
\newcommand{\horizon}{h }
\newcommand{\Value}{V }

\newcommand{\actionsize}{n }
\newcommand{\actioncount}{N }
\newcommand{\actiondelta}{D }
\newcommand{\Real}{\mathbb{R} }
\newcommand{\Time}{T }
\newcommand{\timestep}{t }
\newcommand{\strategy}{\pi }
\newcommand{\Strategy}{\Pi }
\newcommand{\Regret}{R }

\newcommand{\loss}{l }
\newcommand{\Loss}{L }

\newcommand{\p}{\mathbb{P}}
\newcommand{\E}{\mathbb{E}}
\newcommand{\History}{Z }
\newcommand{\history}{z}

\newcommand{\transition}{P}
\newcommand{\alg}{\textsc{ALG}}
\newcommand{\poly}{\mathrm{poly}}

\newcommand{\Ind}{\mathbb{I}}
\newcommand{\Var}{\mathrm{Var}}

\DeclareMathOperator*{\argmax}{arg\,max}

\usepackage[capitalize,noabbrev]{cleveref}
\AddToHook{cmd/appendix/before}{\crefalias{section}{appendix}}

\theoremstyle{plain}
\newtheorem{theorem}{Theorem}[section]
\newtheorem{proposition}[theorem]{Proposition}
\newtheorem{lemma}[theorem]{Lemma}

\theoremstyle{definition}
\newtheorem{definition}[theorem]{Definition}
\newtheorem{assumption}[theorem]{Assumption}
\theoremstyle{remark}

\newtheorem*{lemma*}{Lemma}

\usepackage[textsize=tiny]{todonotes}

\icmltitlerunning{Playing Markov Games Without Observing Payoffs}

\begin{document}

\twocolumn[
  \icmltitle{Playing Markov Games Without Observing Payoffs}



  \icmlsetsymbol{equal}{*}

  \begin{icmlauthorlist}
    \icmlauthor{Daniel Ablin}{equal,yyy}
    \icmlauthor{Alon Cohen}{equal,yyy}
  \end{icmlauthorlist}

  \icmlaffiliation{yyy}{Department of Electrical Engineering, University of Tel Aviv, Tel Aviv, Israel}

  \icmlcorrespondingauthor{Daniel Ablin}{danielablin@mail.tau.ac.il}
  \icmlcorrespondingauthor{Alon Cohen}{alonco@tauex.tau.ac.il}

  \icmlkeywords{Machine Learning, ICML, Markov Games,Stochastic Games,Game Theory,Online Learning,Payoff-Free Learning,Optimization under uncertainty}

  \vskip 0.3in
]



\printAffiliationsAndNotice{}  

\begin{abstract}
Optimization under uncertainty is a fundamental problem in learning and decision-making, particularly in multi-agent systems. 
Previously, \citet*{feldman2010playing} demonstrated the ability to efficiently compete in repeated symmetric two-player matrix games without observing payoffs, as long as the opponent’s actions are observed. 
Extending this capability to the Markovian setting remains an open problem.
In this paper, we introduce and formalize a new class of zero-sum symmetric Markov games, which extends the notion of symmetry from matrix games to the Markovian setting. 
We prove that a learner observing only the opponent's action sequence, without access to payoff information can successfully compete against an adversary possessing complete knowledge of the game.
We formalize three distinct notions of symmetry in this domain and reveal a surprising structural hierarchy: the most holistic definitions of symmetry impose restrictive constraints that actually simplify the learning landscape, whereas the ``simplest'' definition represents the most general and challenging setting.
We provide polynomial-time algorithms for all three settings that achieve sublinear regret. Crucially, we demonstrate that despite the complex Markovian dynamics, a simple strategy of locally mimicking the opponent's actions suffices to guarantee robustness. 
This finding significantly broadens the class of games where robust learning is possible under severe informational disadvantage, proving that knowledge of the transition laws is not required to force a draw.
\end{abstract}

\section{Introduction}
In unfamiliar or competitive settings, imitation often emerges as a natural coping tactic. 
In rapidly evolving online markets, where advertising tactics can shift daily, newer companies frequently align their campaigns with those of more seasoned competitors. 
Directly assessing the success of a marketing effort is challenging, so inexperienced firms may gain by replicating the strategies of those with greater market knowledge. 
Lacking the time or budget for extensive research, these newcomers must act decisively to stay competitive against entrenched players. However, imitation carries significant risk: executed well, it can lead to outsize gains with minimal investment; executed poorly, it can result in outcomes worse than random chance. 
This problem is even greater if one consider the scenario where company uses a long range advertising campaign where each advertising strongly dependent on previous tactics.
To study this behavior, we frame it as a strategic interaction where the objectives of the competitor may conflict with your own. More specifically a zero-sum symmetric game.

Previously, \citet{feldman2010playing} studied repeated play of finite zero-sum symmetric matrix games.
They considered a scenario where the learner has no information about the game and does not observe any payoff. 
The learner is only aware of the available actions and observe the actions played by the adversary. 
On the other hand, the adversary may have complete prior knowledge of the payoff matrix.
They demonstrated a simple \emph{copycat strategy} that guarantees, against any opponent's strategy:
$\E[|\sum_{\timestep=1}^\Time \utility(\action^1_\timestep,\action^2_\timestep)|] = O(\actionsize\sqrt{\Time})$,
where $\Time$ is the number of rounds, $u$ is the payoff function, $\action^1_\timestep, \action^2_\timestep$ are the actions played by the learner and adversary on round $\timestep$, respectively, and $\actionsize$ is the number of actions.

In our work we ask the following natural question: can the \emph{copycat strategy} of \citet{feldman2010playing} be extended to Markov games, where the current state depends on the history of actions? 
In such a setting, how can a learner effectively mimic an adversary and ensure sublinear return?

Crucially, we operate under a severe informational disadvantage. We assume the learner must navigate the complex transition dynamics without possessing prior knowledge of the game's parameters, neither the payoff function nor the underlying transition laws are known. The learner must rely solely on observing the actions played by the adversary.

To address this, our first challenge is to formalize the concept of a zero-sum symmetric Markov game. Intuitively, symmetry captures the property that if the players were to swap roles, exchanging their strategies and positions, their expected payoffs would simply be negated. This suggests a holistic definition where the value of the game remains symmetric over the entire horizon. Capturing this intuition at different levels of granularity, we identify three natural notions of symmetry:
The first is \emph{per-state symmetric games} (SSG), where each state corresponds to a symmetric zero-sum matrix game.
The second, \emph{symmetry with respect to Markov policies} (MSG), captures symmetry between any two Markov policies; namely policies that depend only on the current state.
The third, \emph{symmetry with respect to history-dependent policies} (HSG), generalizes this further to symmetry between any pair of history-dependent policies.

We aim to design a learning algorithm that works across all three settings. 
Crucially, we demonstrate that unlike symmetric matrix games, where a learner can simply mimic the opponent, Markov games introduce complex transition dynamics that the learner must navigate.

Our central theoretical insight is a counter-intuitive hierarchy of difficulty. One might expect the broader policy-based definitions (MSG and HSG) to generalize the local state-based definition (SSG). Instead, we prove that the constraints imposed by MSG and HSG are so strong that they reduce the problem's complexity. 
Specifically, we show that any MSG can be transformed into an equivalent SSG, and any HSG reduces to a single symmetric matrix game. 
Consequently, we provide polynomial-time algorithms for these settings. To the best of our knowledge, our results are the first to demonstrate tractable learning in Markov games without observing payoffs, restricting the opponent, or requiring auxiliary information.

\subsection{Related work}

\paragraph{Symmetric Markov Games.}
Several recent papers have studied how to formalize symmetry in  Markov decision processes (MDPs) and Markov games. 
\citet{zinkevich2001symmetry} introduce a notion of symmetry in MDPs and show how it can be exploited for more effective learning in single-agent as well as multi-agent systems.
In their formulation, two states are symmetric if they have symmetric actions, and two actions are symmetric if they lead to symmetric outcomes.
They further extend this idea to symmetric MDPs, requiring both the payoff and transition functions to be symmetric.
\citet{zinkevich2006generalized} broaden this perspective to settings involving multiple agents.

Another line of work formulates symmetry at the level of payoffs in Markov games. \citet{flesch2013evolutionary} analyze a two-player Markov game in which the payoff functions are symmetric: if the players’ actions at a given state are exchanged, then their payoffs are swapped as well. 
A similar approach is taken by \citet{yongacoglu2023satisficing}, who consider a multi-player setting and require invariance under arbitrary permutations of the action sets. 
Both frameworks are flexible enough to capture zero-sum games. 
A key limitation, however, is that they assume transition dynamics are insensitive to the ordering of players’ actions, i.e., swapping the players does not alter the state dynamics, which restricts the generality of the resulting games.

\textbf{Learning in Markov games.}
Markov games pose significant theoretical and algorithmic challenges, particularly in adversarial and partially observable settings. 
Prior work identified strong hardness results for core learning objectives in this domain. 
For instance, \citet{bai2020near} prove that computing a best response policy in zero-sum Markov games when facing an adversarial opponent is at least as difficult as the well-known parity with noise problem, which is widely believed to be computationally intractable. 

A closely related special case that has been extensively studied is that of MDPs with known transitions but adversarial or unknown rewards. Such an MDP can be interpreted as a Markov game in which the adversary acts as a second player whose “actions” either modify the state payoffs or, equivalently, redirect transitions to payoff-altered versions of the next state. This line of research (e.g., \citealp{even2009online, zimin2013online, jin2020simultaneously}) primarily focuses on scenarios where the payoff function may vary over time.
In a related line of work, \citet{tian2021online,liu2022learning} establish tight lower bounds on regret in the online learning setting, showing that no algorithm can achieve better than $\Omega(\min\{\sqrt{2^\Horizon \Time}, \Time\})$ regret in episodic zero-sum Markov games with unknown transitions, even when the adversary is limited to using Markov strategies.

Despite these limitations, recent advances have uncovered tractable settings under more cooperative assumptions. For example, \citet{erez2023regret,cai2024near,cui2023breaking,yang2022t,jin2021v} design decentralized algorithms for general-sum Markov games that achieve sublinear regret, assuming all players employ the same proposed learning rule.
Their approach requires no communication between agents and guarantees convergence to a (coarse) correlated equilibrium.

\section{Preliminaries}

\subsection{Markov games}

A two-player zero-sum finite horizon Markov game is defined by the quintuple $(\State, \Action, \Horizon, \transition, \utility)$, where $\Horizon \geq 1$ is the time horizon of the game.
$\State$ is a finite set of states partitioned as $\State=\bigsqcup_{\horizon=1}^{\Horizon}\State_\horizon$, where $\State_1=\{\state_1\}$.
Without loss of generality, we assume that all states in $\State$ are reachable from the initial state $\state_1$ under some pair of policies, as unreachable states do not affect the game's value or dynamics.
$\Action = \{1,\dots,\actionsize\}$ is a finite set of actions available to both players; $\transition(\cdot\mid\state_\horizon,\action^1,\action^2) \in \Delta(\State_{\horizon+1})$ is the probability distribution over the next state when the players 1 and 2 take actions $\action^1$ and  $\action^2$ respectively at state $\state_\horizon \in \State_\horizon$. 
Finally, $\utility: \State \times \Action \times \Action \rightarrow [-1,1]$ is the payoff function for player 1. 
As the game is zero-sum, $-\utility$ is the payoff function for player 2.

At every time step $\horizon=1,\dots,\Horizon$, the players observe the current state $\state_\horizon$, then choose actions $\action^1_\horizon,\action^2_\horizon$ respectively. The players choose their actions independently at random.
Then player 1 gains $\utility(\state_\horizon,\action^1_\horizon,\action^2_\horizon)$ and player 2 gains $-\utility(\state_\horizon,\action^1_\horizon,\action^2_\horizon)$, and the game transitions to the next state $\state_{\horizon+1}$ drawn from $\transition(\cdot\mid\state_\horizon,\action^1_\horizon,\action^2_\horizon)$. 
The game ends after time step $\Horizon$.

A history $\history_\horizon = (\state_1, \action_1^1, \action_1^2, \dots, \state_{\horizon-1},\action_{\horizon-1}^1, \action_{\horizon-1}^2, \state_\horizon)$ is the sequence of states and actions generated by the players up to time $\horizon$. We denote by $\History_\horizon$ the set of histories up to time $\horizon$.
A (history-dependent) policy for player $i$ is a function $\strategy^i : \History_\horizon \mapsto \Delta(A)$ that maps each history $\history_\horizon$ to a probability distribution from which player $i$ draws their action when observing $\history_\horizon$.
A policy is Markov if it is a function of only the current state, namely $\strategy^i : \State \mapsto \Delta(A)$.

We also introduce the concept of the \emph{value function}, which captures the expected cumulative payoff from a given state onward and will be fundamental in defining optimality and proving performance guarantees.
For policies $\strategy^1, \strategy^2$ of the two players, time step $\horizon$ and state $\state \in \State_\horizon$, the \emph{value function} is defined as:
\begin{align*}
\Value^{\strategy^1,\strategy^2}(\state) 
\coloneqq 
\E^{\strategy^1,\strategy^2}\brk[s]*{\sum_{\horizon'=\horizon}^\Horizon\utility(\state_{\horizon'}, \action_{\horizon'}^1,\action_{\horizon'}^2) ~\bigg\vert~ \state_\horizon = s},
\end{align*}
where the expectation is taken over the randomness in the actions taken by the policies and the transition dynamics.
The value function satisfies the Bellman equations (see, e.g., \citealp{puterman1990markov}):
\begin{multline*}
    \Value^{\strategy^1,\strategy^2}(\state) 
    =
    \E^{\strategy^1,\strategy^2} \Bigg[ \utility(\state,\action^1_\horizon,\action^2_\horizon) \\
    + \sum_{\state' \in \State_{\horizon+1}} \transition(\state' \mid \state,\action^1_\horizon,\action^2_\horizon) \Value^{\strategy^1,\strategy^2}(\state') ~\bigg\vert~ \state_\horizon=\state\Bigg]. 
\end{multline*}

We define the value of the game as $\Value^\star = \max_{\strategy^1} \min_{\strategy^2} \Value^{\strategy^1,\strategy^2} (\state_1)$.
The value is guaranteed by a Markov safety level policy $\strategy^\star$ for player 1, that is $\Value^{\strategy^\star,\strategy^2} (\state_1) \ge \Value^\star$ for any strategy $\strategy^2$ for player 2.
Moreover, there is a polynomial-time dynamic programming algorithm that returns such $\strategy^\star$ \cite{wal1976markov}.


\subsection{The Copycat Strategy}
To formalize the strategy, we first define the count matrix $\actioncount_\timestep \in \mathbb{Z}^{\actionsize \times \actionsize}$, where $\actioncount_\timestep(\action^1,\action^2)$ denotes the number of times the joint action $(\action^1,\action^2)$, the learner playing $\action^1$ and the adversary playing $\action^2$, has occurred up to time $\timestep$ (noninclusive). 
We further define the \textbf{discrepancy matrix} $\actiondelta_\timestep$ as the difference between the counts of symmetric action pairs:
\begin{equation*}
    \actiondelta_\timestep(\action^1,\action^2) \coloneqq \actioncount_\timestep(\action^2,\action^1) - \actioncount_\timestep(\action^1,\action^2).
\end{equation*}

The Copycat strategy \cite{feldman2010playing}, restated below, is a simple mixed strategy that can be computed at period $\timestep$ in time $\poly(\actionsize, \log \timestep)$ using linear programming. The strategy's goal is to equalize the number of occurrences of $(\action^1, \action^2)$ and $(\action^2, \action^1)$ for each $\action^1, \action^2 \in \Action$. To this end, the Copycat plays a virtual zero-sum game defined by $\actiondelta_\timestep$ in each period.
Crucially, observe that by definition $\actiondelta_\timestep$ is skew-symmetric. 
This implies that when $\actiondelta_\timestep$ is treated as a payoff matrix, it defines a symmetric zero-sum game with a value of $0$.

\begin{algorithm}[H]
   \caption{Copycat strategy}
   \label{alg:Main}
\begin{algorithmic}[1] 
    \FOR{$\timestep=1$ {\bfseries to} $\Time$}
        \STATE Output any max-min mixed-strategy of the zero-sum game with payoffs $\actiondelta_\timestep=\actioncount^T_\timestep-\actioncount_\timestep$.
    \ENDFOR
\end{algorithmic}
\end{algorithm}

\section{Problem definition and main results}
We consider the problem of learning a two-player zero-sum Markov symmetric game by repeated play. We operate under a severe informational disadvantage: the learner never observes the game's payoff, only the adversary's actions, and assumes no prior knowledge of the transition probability function $\transition$.
In contrast, we assume the adversary can be omniscient, possessing complete prior knowledge of both the transition dynamics $\transition$ and the payoff function $\utility$.

The game is played for $\Time$ episodes.
At the beginning of each episode $\timestep$, both players choose policies $\strategy_\timestep^1, \strategy_\timestep^2$, that are  then used to play the game in the current episode. 
While the opponent can choose any, possibly history-dependent, policy, we restrict the learner to select only Markov policies to facilitate efficient learning. 
Thereafter, both players observe the trajectory sampled from the game's dynamics and the actions of their opponent. 
Crucially, only the adversary can calculate the incurred payoff.
The goal of the learner is to guarantee their own payoff approaches the value of the game, which is 0. 

Formally, the learner minimizes $\abs{\E\brk[s]{\sum_{\timestep=1}^\Time \Value^{\strategy_\timestep^1, \strategy_\timestep^2}(\state_1)}}$.
We aim for a bound that is sublinear in $\Time$, implying the average payoff converges to zero.
In this work we consider three types of symmetry in Markov games.
\begin{assumption}[Per-state symmetric game; SSG]
\label[assumption]{assum:SSG}
    A zero-sum Markov game is SSG if every state corresponds to a symmetric matrix game. Formally, $\utility(\state,\action^1,\action^2) = -\utility(\state,\action^2,\action^1)$ for all $\state \in \State$ and $\action^1,\action^2\in \Action$.
\end{assumption}

While SSG enforces symmetry locally on the individual state level, one might ask: what if we only require the expected return to be symmetric?

\begin{assumption}[Symmetry w.r.t. Markov policies; MSG]
\label[assumption]{assum:MSG}
    A zero-sum Markov game is MSG if for every Markov policies $\strategy^1,\strategy^2$ it holds that
    $
        \Value^{\strategy^1, \strategy^2}(\state_1) = -\Value^{\strategy^2, \strategy^1}(\state_1).
    $
\end{assumption}

Finally, we consider the most expansive definition, where symmetry must hold even for history-dependent policies.

\begin{assumption}[Symmetry w.r.t. history-dependent policies; HSG]
\label[assumption]{assum:HSG}
    A zero-sum Markov game is HSG if for every two policies $\strategy^1,\strategy^2$, possibly history-dependent, it holds that
    $
        \Value^{\strategy^1, \strategy^2}(\state_1) = -\Value^{\strategy^2, \strategy^1}(\state_1).
    $
\end{assumption}

\subsection{Our results}
Our analysis reveals a counter-intuitive hierarchy across the three symmetry classes. One might naturally assume that ``holistic'' notions of symmetry (MSG and HSG), which enforce symmetry over policies or histories rather than local states, would yield a richer, more general class of games.
However, we establish the exact opposite: stronger symmetry requirements impose such severe constraints on the game structure that they effectively simplify the learning landscape.
We detail this hierarchy below:

First, for \textbf{SSG} games (\cref{assum:SSG}), we establish the following baseline guarantee.

For \textbf{SSG} games (\cref{assum:SSG}), we establish the following baseline guarantee.
\begin{proposition} 
\label[proposition]{prop:ssg}
In \cref{sec:SSG} we show that the learner can guarantees $\abs{\E\brk[s]{\sum_{\timestep=1}^\Time \Value^{\strategy_\timestep^1, \strategy_\timestep^2}(\state_1)}}=O(\actionsize \sqrt{\Time|\State|\Horizon})$ for SSG games in $\poly(|\State|, \actionsize, \Time)$ time.
\end{proposition}

This result (detailed in \cref{sec:SSG}) relies on the observation that local state symmetry allows the learner to decouple the Markov game. 
By treating each state as an isolated matrix game, we can deploy the Copycat strategy of \cite{feldman2010playing} independently at each step to achieve sublinear regret.

Next, we address \textbf{MSG} games (\cref{assum:MSG}). Here, the payoff structure is significantly more intricate, as the value depends on the coupled interaction between immediate rewards and future game dynamics. Intuitively, this dependency should make learning harder. However, we prove that the MSG symmetry condition is unexpectedly restrictive.

In \cref{sec:MSG}, we unveil a reductive transformation that disentangles the game's complex transition dynamics. We show that by identifying the correct transformation, we can recursively ``peel back'' the horizon, revealing that every layer of an MSG effectively decomposes into independent skew-symmetric matrix games. 
Crucially, we prove that this transformation preserves the game's value for any pair of strategies. Consequently, from the learner's perspective, the MSG class is structurally equivalent to SSG, enabling the direct and efficient application of our SSG algorithm.

For \textbf{MSG} games (\cref{assum:MSG}) we prove the following.
\begin{proposition} 
\label[proposition]{prop:msg}
By Playing the SSG algorithm the learner can guarantees $\abs{\E\brk[s]{\sum_{\timestep=1}^\Time \Value^{\strategy_\timestep^1, \strategy_\timestep^2}(\state_1)}}=O(\actionsize \sqrt{\Time|\State|\Horizon})$ for MSG games.
\end{proposition}

Finally, for \textbf{HSG} games (\cref{assum:HSG}), we show that the symmetry constraint is so strong it trivializes the sequential nature of the problem.
In \cref{sec:HSG}, we prove that the HSG assumption forces the expected future value from time step $\horizon=2$ onward to be constant, regardless of the players' actions. This effectively collapses the Markov game into a single equivalent matrix game, which can be solved directly via the standard Copycat algorithm.

For \textbf{HSG} games (\cref{assum:HSG}) we prove the following.

\begin{proposition} 
\label[proposition]{prop:hsg}
In \cref{sec:HSG} we describe a strategy that guarantees $\abs{\E\brk[s]{\sum_{\timestep=1}^\Time \Value^{\strategy_\timestep^1, \strategy_\timestep^2}(\state_1)}}=O(\Horizon \actionsize \sqrt{\Time})$ for HSG games.
Our algorithm runs in $\poly(\actionsize, \Time,|\State|)$ time.
\end{proposition}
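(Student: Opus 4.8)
The plan is to exploit the fact that \cref{assum:HSG} is far more restrictive than it appears and to collapse the entire Markov game into a single symmetric zero-sum matrix game, to which the copycat strategy of \cite{feldman2010playing} then applies. As a warm-up and sanity check, note that HSG trivially implies MSG (every Markov pair is in particular a history-dependent pair), so the MSG guarantee (\cref{prop:msg}) already certifies a valid, if weaker, bound carrying a spurious $\sqrt{|\State|}$ factor; the whole point of the HSG analysis is to remove it. To that end I would first isolate the structural content of HSG: I claim it is equivalent to full player-swap symmetry of the game primitives, namely per-state antisymmetry $\utility(\state,\action^1,\action^2) = -\utility(\state,\action^2,\action^1)$ (so each state is itself a symmetric matrix game, exactly \cref{assum:SSG}) together with swap-invariant dynamics $\transition(\cdot\mid\state,\action^1,\action^2) = \transition(\cdot\mid\state,\action^2,\action^1)$.

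Proving this characterization is the main obstacle. I would argue by backward induction on the layer $\horizon$ that the symmetry $\Value^{\strategy^1,\strategy^2}(\state) = -\Value^{\strategy^2,\strategy^1}(\state)$ holds not only at $\state_1$ but at every state $\state \in \State_\horizon$: using history-dependent policies that first steer the shared trajectory toward a target state and then play a common continuation from $\State_{\horizon+1}$ onward---which, by the inductive symmetry, has value zero since both players use the same policy---I can isolate the sub-game value at $\state$. Feeding such a pair into the Bellman equation then forces $\utility(\state,\action^1,\action^2) + \utility(\state,\action^2,\action^1) = 0$, and letting the shared continuation range over different next-state value profiles isolates the transition terms and forces $\transition(\cdot\mid\state,\action^1,\action^2) = \transition(\cdot\mid\state,\action^2,\action^1)$. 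The delicate points are that a target state is in general reached only with probability less than one---so I would work on the reachable support and separate the linear constraints by varying the continuation---and that it is precisely the adversary's \emph{responsive}, history-dependent policies that supply enough independent equations to decouple the payoff constraints from the transition constraints.

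Given full player-swap symmetry, the remaining step is to set up the reduced game and transfer the copycat guarantee. The key gain of the symmetry is dimension reduction: whereas viewing each episode as a move in a matrix game over Markov policies would reintroduce a $\sqrt{|\State|}$ factor, swap-invariant dynamics let me couple the two players' shared trajectory and compress the interaction into a single $\actionsize \times \actionsize$ symmetric zero-sum matrix game $G$ whose payoff is the total episode return, so its entries lie in $[-\Horizon,\Horizon]$. The learner plays the matrix-game copycat of \cite{feldman2010playing} on $G$ by imitating the adversary's previous-episode move; since against a Markov learner the adversary's history-dependent policy is occupancy-equivalent to a Markov one, that effective move is well defined and reconstructible from the post-episode observation of the trajectory and the opponent's actions, keeping each update polynomial. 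The copycat guarantee bounds the objective $\abs*{\E\brk[s]*{\sum_{\timestep=1}^\Time \Value^{\strategy_\timestep^1,\strategy_\timestep^2}(\state_1)}}$ by $O(\actionsize\sqrt{\Time})$ for unit-range payoffs---through the Blackwell-approachability/OCO equivalence the paper already invokes---and rescaling by the range $\Horizon$ yields the claimed $O(\Horizon\actionsize\sqrt{\Time})$. Every per-episode computation being polynomial gives total running time $\poly(|\State|,|\Action|,\actionsize,\Time)$.
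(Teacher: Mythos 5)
There is a genuine gap: your central structural claim---that \cref{assum:HSG} is \emph{equivalent} to per-state antisymmetry (\cref{assum:SSG}) together with swap-invariant dynamics $\transition(\cdot\mid\state,\action^1,\action^2)=\transition(\cdot\mid\state,\action^2,\action^1)$---is false in \emph{both} directions, and the backward induction you sketch cannot repair it. For the direction HSG $\Rightarrow$ symmetric primitives, consider a two-round game with $\utility(\state_1,\cdot,\cdot)\equiv 0$, two terminal states $\state^+,\state^-$ carrying \emph{constant} payoffs $\utility(\state^+,\cdot,\cdot)\equiv 1$, $\utility(\state^-,\cdot,\cdot)\equiv -1$, and $\transition(\state^+\mid\state_1,\action^1,\action^2)=p(\action^1,\action^2)$ where $p(\action^1,\action^2)+p(\action^2,\action^1)=1$ but $p(1,2)=1\neq 0=p(2,1)$. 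Every pair of (even history-dependent) policies earns exactly $\E[2p(\action^1_1,\action^2_1)-1]$, so the HSG identity holds, yet the dynamics are not swap-invariant and the terminal payoff matrices are not antisymmetric (their diagonal is $\pm 1$, not $0$). For the converse direction, take symmetric primitives: zero payoff and action-independent transitions at $\state_1$, and matching pennies at the unique second-round state $\state_2$ (actions $\{1,2\}$, $\utility(\state_2,\action,\action)=0$, $\utility(\state_2,1,2)=1=-\utility(\state_2,2,1)$). Let $\strategy^1$ play action $1$ everywhere, and let $\strategy^2$ play action $2$ at $\state_1$ and, at history $(\state_1,x,y,\state_2)$, play action $2$ if $(x,y)=(1,2)$ and action $1$ otherwise. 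Then $\Value^{\strategy^1,\strategy^2}(\state_1)=\utility(\state_2,1,2)=1$ while $\Value^{\strategy^2,\strategy^1}(\state_1)=\utility(\state_2,1,1)=0$, so their sum is $1\neq 0$: symmetric primitives do \emph{not} imply HSG, because history-dependent policies can condition on \emph{which player} played which first-round action, which breaks the swap symmetry that holds for Markov policies. This second example also invalidates your inductive invariant $\Value^{\strategy^1,\strategy^2}(\state)=-\Value^{\strategy^2,\strategy^1}(\state)$ at interior states; what HSG actually forces (the paper's \cref{lem:EqualValueFunctions}) is that the continuation value at every $\history_2$ is a policy-independent constant $G(\history_2)/2$---\emph{equal} across all policy pairs, and generally nonzero---not antisymmetric state by state.

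Your endpoint (collapse to one $\actionsize\times\actionsize$ symmetric zero-sum matrix game with entries of magnitude $O(\Horizon)$, run the matrix copycat, conclude $O(\Horizon\actionsize\sqrt{\Time})$ in polynomial time) coincides with the paper's, but the route to it must be different: the paper first recasts the HSG as an MSG over histories, proves via \cref{lem:EqualValueFunctions} that the game is strategically trivial from round $2$ onward, and only then defines the equivalent matrix game $\utility'(\action^1,\action^2)=\utility(\state_1,\action^1,\action^2)+\sum_{\state'}\transition(\state'\mid\state_1,\action^1,\action^2)\Value^{\strategy^\star,\strategy^\star}((\state_1,\action^1,\action^2,\state'))$, whose antisymmetry follows from HSG applied to first-round play. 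Without that lemma (or a correct substitute), your reduced game is not even well defined: under your posited ``symmetric primitives'' structure the episode return genuinely depends on actions taken after round $1$, so there is no single matrix over first-round actions whose payoff equals the episode value, and the copycat guarantee cannot be transferred. (Your warm-up remark---that HSG implies MSG and hence \cref{prop:msg} already gives the weaker $O(\actionsize\sqrt{\Time|\State|\Horizon})$ bound---is correct, but it does not help close this gap.)
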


To complement these upper bounds, we provide a matching lower bound for the SSG setting in \cref{app:lower_bound}.

\begin{proposition}
\label[proposition]{prop:lower_bound}
There exists an adversary and a payoff function for which $\abs{\E\brk[s]{\sum_{\timestep=1}^\Time \Value^{\strategy_\timestep^1, \strategy_\timestep^2}(\state_1)}}=\Omega(\actionsize \sqrt{\Time|\State|\Horizon})$.
This result confirms that our algorithms operate in the optimal regime with respect to the parameters of the game.
\end{proposition}

\textbf{Summary.} Our results generalize the Copycat scheme to the Markovian setting, establishing a comprehensive framework for learning in symmetric games without payoffs. Our analysis uncovers a fundamental trade-off: while MSG and HSG appear theoretically broader, their requirement for global value symmetry imposes such severe structural constraints that the learning problem effectively simplifies. By examining the most natural symmetry notions, we argue that for a symmetry notion to support a meaningful Copycat strategy, it must be defined at the state level (SSG). 
Our findings suggest that SSG is not merely the most intuitive definition, but structurally the most general; in contrast, attempting to induce complexity from the game's dynamics through holistic payoff definitions effectively neutralizes the impact of the transition function, implicitly reducing the game to a simpler, often static, problem.
Finally, our matching lower bound validates that our approach to the copycat problem is near-optimal.
An interesting avenue for future work is the analysis of approximate symmetry. While our work establishes the theoretical foundations for SSG, MSG, and HSG settings, investigating how these guarantees hold under small deviations from perfect symmetry would further bridge the gap to practical applications.

\section{Copycat strategy in SSG}
\label{sec:SSG}

In this section, we prove \cref{prop:ssg} and present our strategy for learning in Per-state Symmetric Games (SSG) under \cref{assum:SSG}. In this setting, every state $s \in S$ encapsulates a symmetric zero-sum matrix game.

\subsection{Strategy: Copycat on Each State}

At first glance, generalizing from matrix games to Markov games suggests that the adversary might exploit the transition dynamics to gain an advantage. 
Since players' actions influence future state visitations, one might assume a holistic approach accounting for long-term planning is required. 
However, we demonstrate that the Copycat strategy is surprisingly robust to unknown dynamics. We show that for SSG, the learner can effectively decouple the transition dynamics from the immediate payoffs. By simply treating each state as an independent matrix game and applying the Copycat strategy locally, the learner can ignore the transition function $\transition$ entirely while still guaranteeing sublinear regret, avoiding any exponential scaling with the game's parameters.

We analyze the strategy where the learner plays the original Copycat algorithm \cite{feldman2010playing} on each state $\state \in \State$ independently. Let $\utility_\star$ denote the true (unobserved) payoff of the game. We define the cumulative payoff $\alpha$ as:
\[
\alpha = \sum_{\timestep=1}^\Time\sum_{\horizon=1}^\Horizon\utility_\star(\state_{\horizon,\timestep}, \action^1_{\horizon,\timestep},\action^2_{\horizon,\timestep}).
\]
Using the fact that the magnitude of the cumulative payoff is bounded by the final discrepancy of the action counts. Specifically: 
\[
\abs*{\alpha} \leq\sum_{\horizon=1}^\Horizon\sum_{\state\in\State_\horizon}\sum_{\action^1,\action^2 \in \Action}\frac{1}{2}\abs*{\actiondelta_{\Time+1}(\state, \action^1,\action^2)}
\]
where $\actiondelta_{\Time+1}(\state, \action^1, \action^2)$ is the final entry in the discrepancy matrix for state $\state$ after $\Time$ episodes.

and,by Cauchy-Schwartz:
\begin{align*}
    &(\E\brk*{\abs*{\alpha}})^2 \leq \E\brk[s]*{\alpha^2}\\
    &\qquad\leq \E\brk[s]*{\brk*{\sum_{\horizon=1}^\Horizon\sum_{\state\in\State_\horizon}\sum_{\action^1,\action^2\in\Action}\frac{1}{2}\abs*{\actiondelta_{\Time+1}(\state, \action^1,\action^2)}}^2}\\
    &\qquad\leq \E\brk[s]*{\frac{|\State|\actionsize^2}{4}\sum_{\horizon=1}^\Horizon\sum_{\state\in\State_\horizon}\sum_{\action^1,\action^2\in\Action}(\actiondelta_{\Time+1}(\state, \action^1,\action^2))^2}
\end{align*}
To analyze the growth of this term, we define the potential function $\beta^\horizon_\timestep$ representing the sum of squared discrepancies at time $\timestep$ for states at horizon $\horizon$:
$
    \beta^\horizon_\timestep 
    = 
    \sum_{\state\in\State_\horizon}\sum_{\action^1,\action^2\in\Action}(\actiondelta_{\Time+1}(\state, \action^1,\action^2))^2.
$

We analyze the single-step difference $\beta^\horizon_{\timestep+1} - \beta^\horizon_\timestep$. Note that in each episode $\timestep$, exactly one state $\state_\horizon$ is visited at each step $\horizon$. For the unvisited states, the discrepancy does not change. For the visited state $s=\state_\horizon$, the change is bounded as follows:
\begin{align*}
    \beta^\horizon_{\timestep+1} - \beta^\horizon_\timestep 
    &= 
    \begin{cases}
        0 & \text{if } \action^1_{\horizon,\timestep}=\action^2_{\horizon,\timestep}; 
        \\
        2-4\actiondelta_\timestep(\state_{\horizon,\timestep}, \action^1_{\horizon,\timestep},\action^2_{\horizon,\timestep}) & \text{otherwise.}
    \end{cases}\\
    &\leq 
    2-4\actiondelta_\timestep(\state_{\horizon,\timestep}, \action^1_{\horizon,\timestep},\action^2_{\horizon,\timestep}). 
\end{align*}

Summing over the horizon $\Horizon$:
\begin{align*}
    \sum_{\horizon=1}^\Horizon\beta^\horizon_{\timestep+1} - \beta^\horizon_\timestep
    &\leq \sum_{\horizon=1}^\Horizon2 -4\actiondelta_\timestep(\state_{\horizon,\timestep}, \action^1_{\horizon,\timestep},\action^2_{\horizon,\timestep})\\
    &= 
    2\Horizon -4\sum_{\horizon=1}^\Horizon\actiondelta_\timestep(\state_{\horizon,\timestep}, \action^1_{\horizon,\timestep},\action^2_{\horizon,\timestep}).
\end{align*}
Crucially, the Copycat strategy computes the max-min strategy for the virtual game defined by $\actiondelta_\timestep(\state,\cdot,\cdot)$. This guarantees that the expected ``virtual payoff'' is non-negative at every step:
$
\E\brk[s]{\actiondelta_\timestep(\state_\horizon, \action^1_\timestep,\action^2_\timestep)} \geq 0.
$
By linearity of expectation:
$
    \E\brk[s]{\sum_{\horizon=1}^\Horizon\actiondelta_\timestep(\state_{\horizon,\timestep}, \action^1_{\horizon,\timestep},\action^2_{\horizon,\timestep})} 
    = 
    \sum_{\horizon=1}^\Horizon\E\brk[s]{\actiondelta_\timestep(\state_{\horizon,\timestep}, \action^1_{\horizon,\timestep},\action^2_{\horizon,\timestep})} \geq 0.
$

Substituting this back into the potential function inequality, we obtain: 
$
    \E\brk[s]{\sum_{\horizon=1}^\Horizon\beta^\horizon_{\timestep+1}-\beta^\horizon_\timestep} \leq 2\Horizon.
$

Since $\beta^\horizon_0 = 0$ (assuming zero initial discrepancy), telescoping the sum over $\Time$ yields $\E\brk[s]{\sum_{\horizon=1}^\Horizon \beta^\horizon_{\Time+1}} \leq 2\Horizon\Time$.
Finally, substituting this bound into our expression for the squared expected payoff:
\begin{align*}
    (\E\brk*{\abs*{\alpha}})^2 
    \leq 
    \E\brk[s]*{\frac{|\State|\actionsize^2}{4}\sum_{\horizon=1}^\Horizon\beta_{\Time+1}^\state}
    \leq \frac{\Time|\State|\Horizon\actionsize^2}{2}.
\end{align*}
Taking the square root yields the desired bound. This result confirms that in the SSG setting, the adversary's ability to influence the trajectory yields minimal advantage, as they cannot force a worse regret than what is incurred by solving the local matrix games.

\subsubsection{General sum case}

We extend our results to general-sum symmetric games, a significantly more challenging setting where the value of the game is not necessarily zero (i.e., outcomes exists where both players win or both lose). 

We achieve this by considering the normalized ``skew-symmetrized'' payoff $\hat\utility(\state,\action^1,\action^2) = \frac{1}{2}(\utility_\star(\state, \action^1,\action^2) - \utility_\star(\state, \action^2,\action^1))$ , such that $\hat\utility(\state,\action^1,\action^2) \in [-1,1]$. 
The learner's payoff with respect to this contracted game is similarly bounded:
\begin{align*}
    &\E\brk[s]*{\abs*{\sum_{\timestep=1}^\Time\sum_{\horizon=1}^\Horizon\utility_\star(\state_{\horizon,\timestep}, \action^1_{\horizon,\timestep},\action^2_{\horizon,\timestep})-\utility_\star(\state_{\horizon,\timestep},\action^2_{\horizon,\timestep},\action^1_{\horizon,\timestep})}}\\
    &\qquad\qquad\leq\sqrt{\frac12  \Time|\State|\Horizon\actionsize^2}.
\end{align*}

This implies a powerful result: that even in general-sum SSG, the learner can enforce a symmetric outcome on average regardless of the opponent's strategy. 
Namely, even if the adversary plays sub-optimally or maliciously, the learner effectively ``mirrors'' the opponent's success (or failure), ensuring they are never exploited.

In the next section, we generalize our approach to the broader class of MSG games (\cref{assum:MSG}).

\section{Copycat strategy in MSG}
\label{sec:MSG}

In this section, we prove \cref{prop:msg}. 
Our approach to learning under the MSG assumption (\cref{assum:MSG}) focuses on exploiting the structural constraints induced by the symmetry of the value function.

The key distinction from SSG is that symmetry is now enforced on the expected cumulative payoff rather than locally at each state. The set of valid MSG payoff functions is defined as:
\begin{align*}
    \Utility 
    =  
    \Bigg\{&\utility \in \Real^{|\State| \times \actionsize \times \actionsize} 
    ~:~ 
    |\utility(\state, \action^1,\action^2)| \le 1, \\
    &\forall (\state,\action^1,\action^2) \in \State \times \Action \times \Action, ~\text{and}~\\
    &\Value^{\strategy^1,\strategy^2}(\state_1)=-\Value^{\strategy^2,\strategy^1}(\state_1)
    , \forall \strategy^1,\strategy^2 \in \Strategy_{M}\Bigg\},
\end{align*}
here $\Strategy_{M}$ denotes the set of all Markov policies.

The central challenge in the MSG setting is that while the game is globally zero-sum symmetric, individual states need not be skew-symmetric. This breaks the local guarantee of the Minimax strategy used in the SSG case: the expected ``virtual payoff'' at any given step is no longer guaranteed to be non-negative.

Intuitively, one might expect the MSG assumption to be a broad generalization of SSG. However, we demonstrate that this intuition is misleading: the holistic nature of MSG imposes constraints that some locally symmetric games fail to satisfy. 
We illustrate this distinction with the counter-example in \cref{fig:ssg-not-msg}.

\begin{figure}[H]
\centering
\begin{tikzpicture}[
    scale=0.75, 
    transform shape,
    game matrix/.style={
        draw,
        thick,
        fill=white,
        rectangle,
        rounded corners=2pt,
        inner sep=5pt,
        align=center
    },
    transition arrow/.style={
        ->,
        >=stealth, 
        thick,
        color=black!70
    },
    label style/.style={
        font=\sffamily\small,
        color=black!80
    },
    title style/.style={
        font=\sffamily\bfseries\Large,
        color=black
    }
]
\label{exp:SSG-not-MSG}

    \newcommand{\payoffmatrix}[4]{
        \begin{tabular}{r|c|c|}
            \multicolumn{1}{r}{} & \multicolumn{1}{c}{$1$} & \multicolumn{1}{c}{$2$} \\ \cline{2-3}
            $1$ & #1 & #2 \\ \cline{2-3}
            $2$ & #3 & #4 \\ \cline{2-3}
        \end{tabular}
    }

    
    \node[game matrix](state1) at (0,0) {
        \textbf{State $s_1$}\\
        \payoffmatrix{0}{1}{-1}{0}
    };

    \node[game matrix](state2) at (6, 1.2) {
        \textbf{State $s_2^A$}\\
        \payoffmatrix{0}{1}{-1}{0}
    };

    \node[game matrix](state3) at (6, -1.2) {
        \textbf{State $s_2^B$}\\
        \payoffmatrix{0}{-1}{1}{0}
    };

    
    \draw[transition arrow] (state1.east) -- (state2.west) 
        node[midway, above, sloped, label style] {$\p(\state^A_2\mid\state_1,a_1, a_2)$};

    \draw[transition arrow] (state1.east) -- (state3.west) 
        node[midway, below, sloped, label style] {$\p(\state^B_2\mid\state_1,a_1, a_2)$};

\end{tikzpicture}
\caption{Example: SSG but not MSG game.}
\label{fig:ssg-not-msg}
\end{figure}
\vspace{-1em}
In this example, if $\transition(\state_2^A\mid\state_1,1,2)=\transition(\state_2^B\mid\state_1,2,1) = 1$, consider the deterministic strategies:
\begin{align*}
    &\strategy^1(\state) = \begin{cases}
    1, &\state = \state_1\\
    1, &\state =  \state_2^A\\
    1, &\state =  \state_2^B        
    \end{cases}
    & \strategy^2(\state) = \begin{cases}
    2, &\state =  \state_1\\
    1, &\state =  \state_2^A\\
    2, & \state =  \state_2^B        
    \end{cases}
\end{align*}
Where $\strategy^i:\State\mapsto\Action$ denotes the action selected by player $i$ at each state.
Here, $\Value^{\strategy^1,\strategy^2}(\state_1) = 1 + 0 = 1$, whereas $\Value^{\strategy^2,\strategy^1}(\state_1) = -1 + 1 = 0$. The game value is asymmetric, violating MSG, despite every local payoff matrix being skew-symmetric. 

This counter-example clearly demonstrates that the MSG assumption induces severe structural constraints. When players swap policies (switching $\strategy^1$ and $\strategy^2$), the resulting trajectory of state visitations may change completely. Yet, the MSG condition demands that the total value of the game remains exactly symmetric (negated) despite this potentially drastically different path through the state space. This implies that the payoffs and transition dynamics must be tightly coupled to ensure such invariance holds across all possible Markov policies.

We leverage this insight by analyzing the game from the latest timestep, by fixing the strategy up to the final state.
This motivates our recursive reduction technique.

\subsection{Expected Payoff Redistribution}

We introduce the \textbf{Expected Payoff Redistribution (EPR)} transformation, which propagates expected future rewards into the current state to reveal the underlying symmetry of the future states.
\vspace{0.2em}
\begin{definition}[Expected Payoff Redistribution; EPR]
\label[definition]{def:EPR}

Given a payoff function $\utility$ with horizon $\Horizon \geq 2$, the EPR transformation on round $\horizon$ yields a new payoff $\hat\utility$ defined as:
\begin{itemize}[nosep,leftmargin=*]
        \item       $\hat\utility(\state,\action^1,\action^2) 
        = \utility(\state,\action^1,\action^2) 
        + \sum_{\state'\in\State_{\horizon+1}} \p(\state'\mid\state,\action^1,\action^2)\utility(\state',1,1)$ for all $\state \in \State_{\horizon}$.
        
        \item 
        $\hat\utility(\state,\cdot,\cdot) = \utility(\state,\cdot,\cdot) - \utility(\state,1,1)$ for all $\state \in \State_{\horizon+1}$.
        \item 
        $\hat\utility(\state,\cdot, \cdot) = \utility(\state,\cdot, \cdot)$ otherwise.
\end{itemize}
\end{definition}
\vspace{0.3em}
\begin{lemma}[EPR on MSG]
\label[lemma]{lem:shorter-game}

    Consider an MSG game with $\Horizon \geq 2$. Applying the EPR transformation (\cref{def:EPR}) on round $\Horizon-1$ yields $\hat\utility$ with the following properties:
    \begin{enumerate}[nosep,leftmargin=*]
        \item At round $\Horizon$, the payoff matrix for every state is skew-symmetric. \label{it:last-round-symmetric}
        \item If we truncate the last round, the resulting game of length $\Horizon-1$ satisfies the MSG assumption (\cref{assum:MSG}). \label{abs}
    \end{enumerate}
\end{lemma}

    The intuition behind this result relies on ``calibrating'' the game using a reference action pair. 
    The MSG assumption (\cref{assum:MSG}) mandates that $V^{\pi^1, \pi^2}(\state_1) = -V^{\pi^2, \pi^1}(\state_1)$ for \textit{any} pair of Markov policies.

    To isolate the payoff structure at the final step $\Horizon$, we employ a calibration technique. 
    We analyze the game's value under a \textbf{reference strategy} where, upon reaching a state $\state \in \State_\Horizon$, both players play a fixed action (e.g., action 1). 
    Since the MSG assumption applies to \textit{all} policies, the value generated by this specific behavior serves as a fixed \textbf{baseline value}.

    By measuring the value of arbitrary actions relative to this baseline, we effectively subtract out the influence of the transition history. This reveals a rigid structure in the final layer where, for all $\action^1,\action^2 \in \Action$:
    \[
        \utility(\state_\Horizon, \action^1, \action^2) + \utility(\state_\Horizon, \action^2, \action^1) = 2\utility(\state_\Horizon, 1, 1).
    \]
    We find that while the payoff matrix at $\Horizon$ is not necessarily skew-symmetric on its own, it \textit{becomes} skew-symmetric once we subtract the baseline value of playing $(1,1)$.

    The EPR transformation explicitly subtracts this $(1,1)$ baseline from the last layer and redistributes it to the previous layer. 
    Once this ``bias'' is removed, the remaining payoff matrix at step $\Horizon$ is forced to be strictly skew-symmetric.
    Crucially, this redistribution allows us to propagate the baseline value to the previous round without breaking the MSG assumption for the shortened game.
    The complete proof is quite involved, and is deferred to \cref{app:shorter-game-proof}.

\begin{lemma}
\label[lemma]{lem:EPR-Preserve-Value}
    For any payoff function $\utility$ with $\Horizon \geq 2$, applying EPR (\cref{def:EPR}) at any round $\horizon < \Horizon$ preserves the value of the game. Specifically, for any strategies $\strategy^1, \strategy^2 \in \Strategy_M$:   
    $
        \Value^{\strategy^1,\strategy^2}(\state_1)=        \hat{\Value}^{\strategy^1,\strategy^2}(\state_1),
    $
    where $\hat\Value$ is the value function under $\hat\utility$.
\end{lemma}

\begin{proof}
    Since $\hat\utility$ differs from $\utility$ only at steps $\horizon$ and $\horizon+1$, it suffices to show that value conservation holds locally. For any $\state \in \State_{\horizon}$:
    \begin{align*}
        &\Value^{\strategy^1,\strategy^2}(\state)
        = \E^{\strategy^1,\strategy^2}\Bigg[\utility(\state, \action^1,\action^2) \\
        &\quad + \sum_{\state'\in\State_{\horizon+1}}\p(\state'\mid\state,\action^1,\action^2)\Value^{\strategy^1,\strategy^2}(\state')\Bigg]\\
        &=\E^{\strategy^1,\strategy^2}\Bigg[\utility(\state, \action^1,\action^2) \\
        &\quad + \sum_{\state'\in\State_{\horizon+1}}\p(\state'\mid\state,\action^1,\action^2)\brk{\utility(\state',1,1)+\hat\Value^{\strategy^1,\strategy^2}(\state')}\Bigg]\\
        &=\E^{\strategy^1,\strategy^2}\Bigg[\hat\utility(\state, \action^1,\action^2) \\
        &\quad +\sum_{\state'\in\State_{\horizon+1}}\p(\state'\mid\state,\action^1,\action^2)\hat\Value^{\strategy^1,\strategy^2}(\state')\Bigg]\\
        &=\hat\Value^{\strategy^1,\strategy^2}(\state). 
        \qedhere
    \end{align*}
\end{proof}

\subsection{MSG is Equivalent to SSG}
\begin{lemma}
    For any MSG game with payoff $\utility$, there exists an SSG game with payoff $\hat\utility$ such that for all Markov policies:
    $
        \Value^{\strategy^1,\strategy^2}(\state_1)=        \hat{\Value}^{\strategy^1,\strategy^2}(\state_1).
    $
\end{lemma}
\begin{proof}
    We prove this by recursive reduction. We start by applying the EPR transformation (\cref{def:EPR}). \cref{lem:shorter-game} guarantees that the transformed game, when viewed as a game of horizon $\Horizon-1$, retains the MSG property. Simultaneously, it ensures the payoff at the discarded $\Horizon$-th time step becomes strictly skew-symmetric.

    We repeat this process $\Horizon-1$ times.
    At each step, we \textbf{decouple} the final time step, transforming its local payoff matrices into skew-symmetric form, and \textbf{propagate} the residual asymmetry to the preceding layer. 
    \cref{lem:EPR-Preserve-Value} ensures that the value function remains invariant throughout. 
    In the final step, the remaining payoff matrix at $\state_1$ must itself be skew-symmetric to satisfy the MSG assumption (\cref{assum:MSG}). 
    The result is a fully skew-symmetric game (SSG) that is equivalent to the original MSG.
\end{proof}

Consequently, from the learner's perspective, the MSG class is structurally equivalent to SSG. The learner can apply the SSG algorithm directly to any MSG instance without modification. Next, we explore learning under the even stricter condition of \cref{assum:HSG}.

\section{Copycat strategy in HSG}
\label{sec:HSG}

In this section we prove \cref{prop:hsg}. 
We show that under the strict HSG assumption (\cref{assum:HSG}), the Markov game effectively collapses into a single symmetric zero-sum matrix game. 
This reduced game can be solved using the Copycat strategy (\cref{alg:Main}), achieving a cumulative payoff bound of $O(\Horizon \actionsize \sqrt{\Time})$ in polynomial time.

Just as not every SSG is an MSG, it is crucial to recognize that the HSG class is a strictly smaller subset of MSG. Intuitively, HSG seems like a natural generalization, but the requirement for symmetry over \textit{history-dependent} policies rules out games with complex state transition structure.

We illustrate this distinction with the counter-example in \cref{fig:msg-not-hsg}: a game that satisfies the MSG condition but fails to be an HSG.

\begin{figure}[H]
\centering
\begin{tikzpicture}[
    scale=0.75, 
    transform shape,
    game matrix/.style={
        draw,
        thick,
        fill=white,
        rectangle,
        rounded corners=2pt,
        inner sep=5pt,
        align=center
    },
    transition arrow/.style={
        ->,
        >=stealth, 
        thick,
        color=black!70
    },
    label style/.style={
        font=\sffamily\small,
        color=black!80
    },
    title style/.style={
        font=\sffamily\bfseries\Large,
        color=black
    }
]
\label{exp:SSG-not-MSG}

    \newcommand{\payoffmatrix}[4]{
        \begin{tabular}{r|c|c|}
            \multicolumn{1}{r}{} & \multicolumn{1}{c}{$1$} & \multicolumn{1}{c}{$2$} \\ \cline{2-3}
            $1$ & #1 & #2 \\ \cline{2-3}
            $2$ & #3 & #4 \\ \cline{2-3}
        \end{tabular}
    }

    
    \node[game matrix](state1) at (0,0) {
        \textbf{State $s_1$}\\
        \payoffmatrix{0}{1}{-1}{0}
    };

    \node[game matrix](state2) at (6, 0) {
        \textbf{State $s_2$}\\
        \payoffmatrix{0}{1}{-1}{0}
    };

    
    \draw[transition arrow] (state1.east) -- (state2.west) 
        node[midway, above, sloped, label style] {Action $(a_1, a_2)$};

\end{tikzpicture}
\caption{Example: MSG but not HSG game.}
\label{fig:msg-not-hsg}
\end{figure}
\vspace{-1em}

Consider the game above with deterministic transitions $\transition(\state_2\mid\state_1,\cdot,\cdot) = 1$. Let us examine the value under specific history-dependent strategies:
\begin{align*}
    &\strategy^1(\history) = \begin{cases}
    1,  &\history =  (\state_1)\\
    1,  &\history =  (\state_1,1,2,\state_2)\\
    1,  &\history =  (\state_1,2,1,\state_2)
    \end{cases}; \\
    &\strategy^2(\history) = \begin{cases}
    2,  &\history =  (\state_1)\\
    1,  &\history =  (\state_1,1,2,\state_2)\\
    2,  &\history =  (\state_1,2,1,\state_2)
    \end{cases},
\end{align*}
where $\strategy^i:\History\mapsto\Action$ denotes the action selected by player $i$ for each history.

Under these strategies, the trajectory is $(\state_1, 1, 2) \to \state_2$. The value is:
\begin{equation*}
    \Value^{\strategy^1,\strategy^2}(\state_1) = \utility(\state_1, 1, 2) + \utility(\state_2, 1, 1) = 1 + 0 = 1.
\end{equation*}
However, if we swap roles, the trajectory becomes $(\state_1, 2, 1) \to \state_2$, and the opponent responds differently to this history. The value becomes:
\begin{equation*}
    \Value^{\strategy^2,\strategy^1}(\state_1) = \utility(\state_1, 2, 1) + \utility(\state_2, 2, 1) = -1 - 1 = -2.
\end{equation*}
Since $1 \neq -(-2)$, the game violates HSG. This example demonstrates that HSG requires symmetry to hold even when strategies condition their future behavior on the specific history of past actions. This restriction is so severe that it effectively forces the future value to be determined solely by the history.

To formalize this, we frame the HSG as an MSG defined over the space of histories (see \cref{app:Rearranging-HSG}). Using this rearrangement, we establish a powerful structural property:
\begin{lemma}[All Value Functions Are Equal]
\label[lemma]{lem:EqualValueFunctions}
    Let $\history_2 \in \History$ be a history up to time 2, namely of the form $\history_2=(\state_1,\action^1_1,\action_1^2,\state_2)$. 
    Let $\Strategy_{H}$ denotes the set of all history-dependent policies.
    For any $\strategy^1,\strategy^2,{\strategy^1}',{\strategy^2}' \in \Strategy_H$, the value at $\history_2$ is identical:
    $
        \Value^{\strategy^1,\strategy^2}(\history_2)=\Value^{{\strategy^1}',{\strategy^2}'}(\history_2).
    $
\end{lemma}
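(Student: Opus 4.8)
The plan is to work in the reformulation of the HSG as an MSG over histories (\cref{app:Rearranging-HSG}): treating each history $\history_\horizon$ as a state turns history-dependent policies into Markov policies, so that \cref{assum:HSG} becomes the MSG identity $\Value^{\strategy^1,\strategy^2}(\state_1)=-\Value^{\strategy^2,\strategy^1}(\state_1)$ for all Markov policies of the enlarged game, and every result already proved for MSG becomes available. I would then prove the stronger claim, by backward induction on $\horizon$ from $\Horizon$ down to $2$, that for every history $\history_\horizon$ the continuation value $\Value^{\strategy^1,\strategy^2}(\history_\horizon)$ depends only on $\history_\horizon$ and not on $\strategy^1,\strategy^2$; the lemma is the case $\horizon=2$.

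The engine is to combine the symmetry with the freedom, peculiar to history-dependent policies, to alter a policy's continuation on a single branch without touching any other branch. Fixing a target $\history_2=(\state_1,\action^1,\action^2,\state_2)$ and a candidate continuation $(\strategy^1,\strategy^2)$, I would build full policies that play $(\action^1,\action^2)$ deterministically at $\state_1$, follow $(\strategy^1,\strategy^2)$ on histories extending $\history_2$, and follow arbitrary fixed continuations elsewhere. Expanding the value at $\state_1$ for these policies and for their swap via the Bellman equations and applying the HSG identity, the only term depending on $(\strategy^1,\strategy^2)$ when $\action^1\neq\action^2$ is $\transition(\state_2\mid\state_1,\action^1,\action^2)\,\Value^{\strategy^1,\strategy^2}(\history_2)$, because the swapped policies visit the reversed branch $(\state_1,\action^2,\action^1,\cdot)$, which is disjoint from the one carrying $(\strategy^1,\strategy^2)$. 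Since this term must match a quantity independent of the continuation and $\transition(\state_2\mid\state_1,\action^1,\action^2)>0$ for a reachable history, $\Value^{\strategy^1,\strategy^2}(\history_2)$ is pinned to a constant. Equivalently, this is what one obtains by invoking \cref{lem:MSG-is-LRSG} at the last round of the history-game and collapsing rounds $3,\dots,\Horizon$ into a single modified payoff via \cref{lem:shorter-game}; the asymmetric-reachability clause of \cref{def:Last-round-symmetry}, realized by the policies of \cref{alg:Non-Symetric-PolicyCreator}, then supplies \cref{eq:equal} and hence constancy.

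I expect the case $\action^1=\action^2$ to be the main obstacle. When the two first-round actions coincide, the reversed branch coincides with the target branch, so swapping the players only transposes the continuation rather than separating it, and the symmetry identity yields merely that the symmetrized sum $\Value^{\strategy^1,\strategy^2}(\history_2)+\Value^{\strategy^2,\strategy^1}(\history_2)$ is constant; this is the equal-sum condition \cref{eq:equal-sum}, strictly weaker than the constancy I want. Closing this gap requires separately establishing the swap-invariance $\Value^{\strategy^1,\strategy^2}(\history_2)=\Value^{\strategy^2,\strategy^1}(\history_2)$. My plan is to obtain it by pushing the backward induction through: once constancy is known for all histories at times $\geq 3$, the continuation from $\history_2$ collapses to a single bilinear form in the last action distributions, on which equal-sum combined with the induction hypothesis can be upgraded to full symmetry, and hence to constancy. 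The delicate point, and the step I would scrutinize most, is guaranteeing that this upgrade is available for every reachable $\state_2$, including states that can be entered only through symmetric first moves and therefore never exhibit the reachability asymmetry that \cref{eq:equal} relies on.
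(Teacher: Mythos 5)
Your treatment of off-diagonal histories ($\action^1\neq\action^2$) is correct, and it is essentially the paper's own argument (\cref{app:EqualValueFunctions}): exploit history-dependence to vary the continuation on the single branch $(\state_1,\action^1,\action^2,\state_2)$ while freezing it everywhere else, then feed this pair and its swap into \cref{assum:HSG}; since the swapped pair lives on the disjoint branch $(\state_1,\action^2,\action^1,\cdot)$, the identity pins $\transition(\state_2\mid\state_1,\action^1,\action^2)\,\Value^{\strategy^1,\strategy^2}(\history_2)$ to a constant. The paper implements the same idea with a uniform first-round policy plus a perturbation shifting mass from $\action^2$ to $\action^1$ (\cref{eq:state-prob-diff}); your deterministic-first-move variant is if anything cleaner, reaching constancy in one step rather than via swap-invariance combined with constant-sum.

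The diagonal case $\action^1=\action^2$, which you flagged as the step to scrutinize, is not merely delicate: it admits a counterexample to the lemma as stated, so no completion of your backward induction (nor any other argument) can close it. Take $\Horizon=2$, $\Action=\{1,2\}$, second-round states $\state_d,\state_o$ with $\transition(\state_d\mid\state_1,\action,\action)=1$ and $\transition(\state_o\mid\state_1,\action^1,\action^2)=1$ for $\action^1\neq\action^2$; let $\utility(\state_1,\cdot,\cdot)$ and $\utility(\state_d,\cdot,\cdot)$ both equal the skew-symmetric matrix with value $1$ at entry $(1,2)$, value $-1$ at entry $(2,1)$, and zero diagonal, and let $\utility(\state_o,\cdot,\cdot)\equiv 0$. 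For deterministic history-dependent policies: if both first actions are $\action$, the two values in \cref{assum:HSG} sum to $2\utility(\state_1,\action,\action)+\utility(\state_d,x,y)+\utility(\state_d,y,x)=0$; if the first actions differ, they sum to $\utility(\state_1,\action^1,\action^2)+\utility(\state_1,\action^2,\action^1)+0+0=0$. Since $\Value^{\strategy^1,\strategy^2}(\state_1)+\Value^{\strategy^2,\strategy^1}(\state_1)$ is affine in each decision rule separately, vanishing at all deterministic policies implies vanishing everywhere, so this game is an HSG. Yet at the diagonal history $\history_2=(\state_1,1,1,\state_d)$ the continuation value is $\E[\utility(\state_d,\action_2^1,\action_2^2)]$, which equals $+1$ when the continuations play $(1,2)$ and $-1$ when they play $(2,1)$: the value at $\history_2$ genuinely depends on the continuation policies.

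This is also precisely where the paper's own proof silently fails, so your suspicion was well placed. For a diagonal history, $\p^{\rho,\sigma}[\history_2]=\rho(\action)\sigma(\action)\transition(\state_2\mid\state_1,\action,\action)$ is symmetric under exchanging the two players' first-round distributions, so the strict inequality \cref{eq:state-prob-diff} required by the perturbation step is unattainable; all that the HSG identity yields there is the constant-sum property $\Value^{\strategy^1,\strategy^2}(\history_2)+\Value^{\strategy^2,\strategy^1}(\history_2)=G(\history_2)$ --- exactly the ``equal-sum'' relation you identified, which the example above satisfies while violating the lemma. The defect propagates downstream: in the example, the Equivalent Matrix Game lemma also fails, since $\utility'(1,1)=0$ while the policy pair that plays $(1,1)$ and then $(2,1)$ at $\state_d$ has value $-1$. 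In short, your off-diagonal argument matches the paper; the diagonal gap you identified is real, is shared by the paper's proof, and cannot be repaired without either strengthening the hypotheses (e.g., restricting to histories whose visitation probability can be made asymmetric under player swap) or weakening the conclusion to the equal-sum statement.
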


The complete proof is provided in \cref{app:EqualValueFunctions}. This lemma implies that from time step 2 onward, the expected return is a fixed constant, completely independent of the players' future strategies. Consequently, the entire outcome of the game is determined solely by the actions played in the first round.

The intuition behind this collapse is that history-dependent policies are extremely powerful, they can condition their future behavior on the exact outcome of the first round. If the value function from step 2 onward depended on the specific policies chosen, we could construct a pair of ``sensitive'' strategies that exploit this dependency to violate the global symmetry condition. The rigid requirement that symmetry must hold for \textit{any} pair of history-dependent policies removes all degrees of freedom from the future game dynamics. The only way to satisfy this constraint is if the future value is effectively pre-determined, rendering the players' subsequent choices irrelevant to the expected payoff.
This observation allows us to reduce the entire HSG to a single matrix game.
\begin{lemma}[Equivalent Matrix Game]
    For any HSG with payoff function $\utility$, there exists an equivalent matrix game with payoff function $\utility'$ such that:
    \begin{align*}
    &\utility'(\action^1,\action^2) 
    = 
    \utility(\state_1, \action^1,\action^2) \\
    &\qquad\qquad
    + 
    \sum_{\state'\in\State} \transition(\state' \mid \state_1,\action^1,\action^2) \Value^{\strategy^*,\strategy^*}((\state_1,\action^1,\action^2,\state')),
\end{align*}
where $\strategy^* \in \Strategy_H$ maintains $\p^{\strategy^*}[a=1 \mid \history] = 1$, $\forall \history \in \History$.
\end{lemma}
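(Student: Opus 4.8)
The plan is to peel off the first round of play and argue that, by \cref{lem:EqualValueFunctions}, everything happening from time step $2$ onward contributes a quantity that depends only on the realized first-round actions $\action^1,\action^2$ and the sampled state $\state'$, but never on the policies. I would begin from the one-step Bellman decomposition (equivalently, the law of total expectation) applied at the root history $\history_1=(\state_1)$. Since at $\state_1$ both players condition only on this trivial history and randomize \emph{independently} (as stipulated in the preliminaries), for any history-dependent $\strategy^1,\strategy^2\in\Strategy_H$ we have
\begin{align*}
\Value^{\strategy^1,\strategy^2}(\state_1)
=
\sum_{\action^1,\action^2\in\Action}
\strategy^1(\action^1\mid\state_1)\,\strategy^2(\action^2\mid\state_1)
\left[
\utility(\state_1,\action^1,\action^2)
+
\sum_{\state'\in\State}
\transition(\state'\mid\state_1,\action^1,\action^2)\,
\Value^{\strategy^1,\strategy^2}\!\big((\state_1,\action^1,\action^2,\state')\big)
\right].
\end{align*}

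Next I would invoke \cref{lem:EqualValueFunctions} through the key substitution $\Value^{\strategy^1,\strategy^2}(\history_2)=\Value^{\strategy^*,\strategy^*}(\history_2)$ for every $\history_2=(\state_1,\action^1,\action^2,\state')$. Crucially, that lemma gives independence of the continuation value from the \emph{policy} but not from the \emph{history}, so the bracketed expression becomes exactly the claimed $\utility'(\action^1,\action^2)$ and no longer depends on $\strategy^1,\strategy^2$. Hence $\Value^{\strategy^1,\strategy^2}(\state_1)=\sum_{\action^1,\action^2}\strategy^1(\action^1\mid\state_1)\,\strategy^2(\action^2\mid\state_1)\,\utility'(\action^1,\action^2)$, which is precisely the expected payoff of the one-shot matrix game with matrix $\utility'$ when the players use the first-round mixed actions $\strategy^1(\cdot\mid\state_1),\strategy^2(\cdot\mid\state_1)$; this is the sense in which the games are equivalent, as the total expected return is preserved for every pair of policies. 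I would then note that, because the HSG satisfies $\Value^{\strategy^1,\strategy^2}(\state_1)=-\Value^{\strategy^2,\strategy^1}(\state_1)$ (\cref{assum:HSG}), comparing pure first-round strategies yields $\utility'(\action^1,\action^2)=-\utility'(\action^2,\action^1)$, so $\utility'$ is skew-symmetric and the reduced game is a symmetric zero-sum matrix game to which the copycat algorithm of \cref{app:Matrix-Game} applies.

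The main point requiring care is the well-definedness of $\Value^{\strategy^*,\strategy^*}\big((\state_1,\action^1,\action^2,\state')\big)$ for action pairs that $\strategy^*$ would never actually play, since $\strategy^*$ always selects action $1$. This is not an obstacle: the value function at a history is defined \emph{conditionally} on having reached that history and following the policies thereafter, so it is well-defined on off-path histories too, and \cref{lem:EqualValueFunctions} guarantees that this off-path continuation value coincides with the common value shared by all policies. The only other detail is that the first-round draws are independent and condition solely on $\state_1$, which is exactly what lets the bracketed term factor through a product distribution over $(\action^1,\action^2)$ and be identified with a matrix-game payoff. All the genuinely hard work—establishing that the continuation value is policy-independent—has already been done in \cref{lem:EqualValueFunctions}, so the present lemma amounts to unrolling its first step.
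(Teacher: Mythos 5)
Your proof is correct and is essentially the paper's own argument: both rest on the one-step Bellman decomposition at $\state_1$ together with \cref{lem:EqualValueFunctions} to replace the policy-dependent continuation values $\Value^{\strategy^1,\strategy^2}(\history_2)$ by the fixed $\Value^{\strategy^*,\strategy^*}(\history_2)$, only you run the chain of equalities from $\Value^{\strategy^1,\strategy^2}(\state_1)$ down to the matrix-game payoff while the paper runs it in the opposite direction. Your added remarks (skew-symmetry of $\utility'$ via \cref{assum:HSG}, and well-definedness of off-path continuation values) are correct bonuses beyond what the paper's proof states.
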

\begin{proof}
    Given any pair of history-dependent policies $\strategy^1,\strategy^2 \in \Strategy_H$, we compute the expected utility under $\utility'$:
    \begin{align*}
        &\E^{\strategy^1,\strategy^2}\brk[s]*{\utility'(\action^1_1,\action^2_1)}
        =
        \E^{\strategy^1,\strategy^2}\Bigg[\utility(\state_1, \action^1_1,\action^2_1) \\
        &\qquad
        +\sum_{\state'\in\State} \transition(\state' \mid \state_1,\action^1_1,\action^2_1) \underbrace{\Value^{\strategy^*,\strategy^*}((\state_1,\action^1_1,\action^2_1,\state'))}_{=\Value^{\strategy^1,\strategy^2}((\state_1,\action^1_1,\action^2_1,\state'))}\Bigg] 
        \tag{\cref{lem:EqualValueFunctions}}\\
        &\qquad=
        \Value^{\strategy^1,\strategy^2}(\state_1).
        \tag{Bellman equations}
    \end{align*}
    Hence, the matrix game faithfully reproduces the expected value of the original HSG.
\end{proof}
Since the problem reduces to a repeated matrix game, the learner can simply apply the standard Copycat strategy (\cref{alg:Main}) to $\utility'$, guaranteeing the stated regret bound.

\ifthenelse{\equal{\version}{preprint}}{
\section{Acknowledgments}

 This project is supported by the Israel Science Foundation (ISF, grant number 2250/22).
 }
 
\clearpage

\section*{Impact Statement}


This paper presents work whose goal is to advance the field of Machine Learning, specifically in the domain of Multi-Agent Reinforcement Learning and Game Theory. By developing algorithms for imitation in competitive Markovian environments, our work addresses the challenge of decision-making under severe informational disadvantage. Theoretically, this lowers the barrier to entry in complex competitive markets (e.g., automated advertising), enabling resource-constrained agents to perform robustly against sophisticated incumbents without requiring extensive knowledge of the system dynamics. We do not foresee any immediate negative ethical or societal consequences that must be specifically highlighted here.

\bibliography{paper}
\bibliographystyle{icml2026}

\newpage
\appendix
\onecolumn

\section{Lower Bound Analysis}
\label{app:lower_bound}
Since the learner does not observe payoffs, they are compelled to play a strategy that is robust against any potential payoff function.
This setting is analytically equivalent to a scenario where the adversary selects the worst-case payoff function after the game concludes.
We begin by analyzing the fundamental case of a repeated matrix game.

Consider the following adversarial payoff function, defined in hindsight based on the final action discrepancies:
\[
\utility(\action^1,\action^2)=
\begin{cases}
    1 & \text{if } \actiondelta_\Time(\action^2,\action^1) \leq 0\\
    -1 & \text{if } \actiondelta_\Time(\action^2,\action^1)>0
\end{cases}
\]
Observe that for this specific payoff function, the cumulative contribution of any symmetric pair $(i,j)$ satisfies:
\[
\utility(\action^1,\action^2)\actioncount(\action^1,\action^2)+\utility(\action^2,\action^1)\actioncount(\action^2,\action^1) = -\abs*{\actiondelta_\Time(\action^1,\action^2)}
\]
Substituting this into the expected payoff guarantee yields:
\begin{align}
\label{eq:payoff-to-difference}
\E\brk[s]*{\abs*{\sum_{\timestep=1}^\Time\utility(\action^1_\timestep,\action^2_\timestep)}}
&=\E\brk[s]*{\abs*{\sum_{\action^1<\action^2}\utility(\action^1,\action^2)\actioncount(\action^1,\action^2) + \utility(\action^2,\action^1)\actioncount(\action^2,\action^1)}}\\
&=\E\brk[s]*{\abs*{\sum_{\action^1<\action^2}-\abs*{\actiondelta_\Time(\action^1,\action^2)}}} \nonumber\\
&=\E\brk[s]*{\sum_{\action^1<\action^2}\abs*{\actiondelta_\Time(\action^1,\action^2)}}\nonumber
\end{align}

Thus, to lower bound the payoff, it suffices to lower bound the expected total discrepancy $\E\brk[s]*{\sum_{\action^1<\action^2}\abs*{\actiondelta_\Time(\action^1,\action^2)}}$

\subsection{Matrix game}
We establish a lower bound of $\Omega(\actionsize\sqrt{\Time})$ on the expected cumulative regret.
We assume that the number of actions is fixed and satisfies $\actionsize > 2$, that the time horizon $\Time$ is sufficiently large, and that the adversary plays a \textbf{uniform strategy}, choosing each action with probability $1/\actionsize$ independently at every step.

Let $\actioncount_\timestep(\action)$ denote the number of times the learner plays action $\action$ up to time $\timestep$. We analyze two regimes based on the learner's action frequencies.

\paragraph{Case 1: The Unbalanced Regime}

Suppose there exists an action $\action^1$ such that the learner's expected count deviates significantly from uniformity:
\begin{equation*}
    \E\brk[s]{\abs*{\actioncount_\Time(\action^1) - \frac{\Time}{\actionsize}}} \ge 2\actionsize\sqrt{\Time}.
\end{equation*}

To lower bound the regret, we decompose the discrepancy using the Reverse Triangle Inequality:
\begin{align*}
    \E\brk[s]*{\actiondelta_\Time(\action^2,\action^1)} &\ge \underbrace{\frac{1}{\actionsize} \E\brk[s]*{\abs*{\actioncount_\Time(\action^1) - \actioncount_\Time(\action^2)}}}_{\text{Signal}} - \underbrace{ \E\brk[s]*{\abs*{\actioncount_\Time(\action^1,\action^2) - \actioncount_\Time(\action^2,\action^1) - \frac{1}{\actionsize}\actioncount_\Time(\action^1) + \frac{1}{\actionsize}\actioncount_\Time(\action^2)}}}_{\text{Noise}}.
    \label{eq:lower_bound_decomp}
\end{align*}

First, we bound the \textbf{Total Signal}. Using the inequality $\sum_{\action^2\in\Action} |\actioncount_\Time(\action^1) - \actioncount_\Time(\action^2)| \ge \actionsize |\actioncount_\Time(\action^1) - \frac{\Time}{\actionsize}|$, we obtain:
\begin{equation}
    \sum_{\action^2 \ne \action^1} \frac{1}{\actionsize} \E\brk[s]*{\abs*{\actioncount_\Time(\action^1) - \actioncount_\Time(\action^2)}} \ge \E\brk[s]*{\abs*{\actioncount_\Time(\action^1) - \frac{\Time}{\actionsize}}} \ge 2\actionsize\sqrt{\Time}.
\end{equation}

Next, we bound the \textbf{Total Noise}. We define the noise term $\mathcal{E}_{\timestep}$ as: 
\[\mathcal{E}_{\timestep} \coloneqq \Ind\{\action^1_\timestep=\action^1,\action^2_\timestep=\action^2\} - \Ind(\action^1_\timestep=\action^2,\action^2_\timestep=\action^2) - \frac{1}{\actionsize}\Ind(\action^1_\timestep=\action^1) + \frac{1}{\actionsize}\Ind(\action^2_\timestep=\action^2).
\]

The total noise corresponds to the magnitude of the sum of these variables. By applying Jensen's inequality and bounding the variance, we have:
\[
\E\brk[s]*{\abs*{\sum_{\timestep=1}^\Time \mathcal{E}_\timestep}}
\;\le\;
\sqrt{\Var\brk*{\sum_{\timestep=1}^\Time \mathcal{E}_\timestep}}
\;\le\;
\sqrt{\frac{1}{\actionsize}\E\brk[s]*{[\actioncount_\Time(\action^1)+\actioncount_\Time(\action^2)}}.
\]

Summing this bound over all $\action^2\neq \action^1$ and applying the Cauchy--Schwarz inequality to the sum of $\actionsize-1$ terms yields:
\[
\sum_{\action^2\neq \action^1} \E\brk[s]*{\abs*{\sum_{\timestep=1}^\Time \mathcal{E}_\timestep}}\le
\sqrt{(\actionsize-1)\sum_{\action^2\neq \action^1}\frac{1}{\actionsize}\E[\actioncount_\Time(\action^1)+\actioncount_\Time(\action^2)]}
\le
\sqrt{\actionsize\Time}.
\]

\textbf{Conclusion.}
Combining the signal bound with the noise bound above yields
\[
\E\brk[s]*{\sum_{\action^1<\action^2}\abs*{\actiondelta_\Time(\action^2,\action^1)}}\ge\sum_{\action^2\neq \action^1}\E\brk[s]*{\abs*{\actiondelta_\Time(\action^2,\action^1)}}
\ge
2n\sqrt{T}-\sqrt{nT}
\ge
n\sqrt{T},
\]
which holds for all sufficiently large $\Time$.

\paragraph{Case 2: The Balanced Regime}
Suppose that for all actions $\action\in\Action$, the counts are concentrated near the uniform allocation:
\[
\E\brk[s]*{\abs*{\actioncount_\Time(\action) - \frac{\Time}{\actionsize}}} < 2\actionsize \sqrt{\Time}.
\]

We note that by \textbf{Jensen's Inequality}:
\[
\abs*{\E\brk[s]*{\actioncount_\Time(\action)} - \frac{\Time}{\actionsize}} \le \E\brk[s]*{\abs*{\actioncount_\Time(\action) - \frac{\Time}{\actionsize}}}< 2\actionsize \sqrt{\Time}.
\]

We lower-bound using the specific payoff of a Rock-Paper-Scissors (RPS) game on the cycle $\Delta = \{\action^1, \action^2, \action^3\}$. Define the payoff function $\utility(\action^1_\timestep, \action^2_\timestep)$ as $1$ if the pair $(\action^1_\timestep, \action^2_\timestep)$ is a winning edge for the learner in $\Delta$, $-1$ if it is a losing edge, and $0$ otherwise.

Let $X \coloneqq \sum_{\timestep=1}^\Time \utility(\action^1_\timestep, \action^2_\timestep)$. By the triangle inequality, the total discrepancy on the cycle upper-bounds the magnitude of this cumulative payoff:
\[
\sum_{a^1,\action^2 \in \Delta} \abs*{\actiondelta_\Time(\action^1,\action^2)} \ge \abs*{X}.
\]
To lower-bound $\E[\abs{X}]$, we employ the Fourth Moment Method (a consequence of the Hölder's inequality):
\[
\E\brk[s]{\abs{X}} \ge \frac{(\E\brk[s]{X^2})^{3/2}}{(\E\brk[s]{X^4})^{1/2}}.
\]

\textbf{The Second Moment.}
The sequence $(\utility(\action^1_\timestep, \action^2_\timestep))_{\timestep=1}^\Time$ forms a martingale difference sequence. The conditional variance at step $\timestep$ is non-zero only if the learner plays an action $i_\timestep \in \Delta$ and the adversary (playing uniformly) hits a neighbor in the cycle (with probability $2/\actionsize$). Summing over all time steps:
\[
\E[X^2] 
\;=\; 
\sum_{t=1}^\Time \E\brk[s]*{\E\brk[s]*{\utility(\action^1_\timestep, \action^2_\timestep)^2 \mid \action^1_1,\action^2_1\dots \action^1_\timestep-1,\action^2_\timestep-1, \action^1_\timestep}} 
= 
\sum_{\timestep=1}^\Time \E\brk[s]*{\frac{2}{\actionsize} \Ind_{\{\action_\timestep \in \Delta\}}} 
= 
\frac{2}{\actionsize} \E\brk[s]*{\actioncount_\Delta},
\]
where $\actioncount_\Delta = \sum_{\action\in\Delta}\actioncount_\Time(\action)$ denotes the total number of times the learner plays in the cycle.

\textbf{The Fourth Moment.}
By the \textbf{Burkholder--Davis--Gundy inequality} (with $p=4$):
\begin{align*}
    \E\brk[s]*{X^4}
    &\le
    C \E\brk[s]*{\brk*{\sum_{\timestep=1}^\Time \utility(\action^1_\timestep,\action^2_\timestep)^2}^2} \\
    &=
    C \E\brk[s]*{\sum_{\timestep=1}^\Time \utility(\action^1_\timestep,\action^2_\timestep)^4 + \sum_{t \ne \tau} \utility(\action^1_\timestep,\action^2_\timestep)^2 \utility(\action^1_\tau,\action^2_\tau)^2} \\
    &\le
    \frac{2 C}{\actionsize} \E\brk[s]*{\actioncount_\Delta} 
    +
    \frac{4C}{\actionsize^2} \E\brk[s]*{\actioncount_\Delta^2}.
\end{align*}
(Note: The last inequality follows because $\E[\utility^4] = \E[\utility^2]$ for indicator-like variables, and the cross-terms decouple into the product of their conditional variances: $\frac{2}{\actionsize} \times \frac{2}{\actionsize} = \frac{4}{\actionsize^2}$).

Since $\actioncount_\Delta$ scales linearly with $\Time$ (specifically $\E[\actioncount_\Delta] \propto \Time/\actionsize$), the quadratic term $\E[\actioncount_\Delta^2]$ grows as $\Time^2$, dominating the linear term $\E[\actioncount_\Delta]$. Thus, for sufficiently large $\Time$, we can absorb the linear term into the quadratic term:
\[
\E\brk[s]*{X^4} \le \frac{8C}{\actionsize^2} \; \E\brk[s]*{\actioncount_\Delta^2}.
\]

\textbf{Bounding the Second Moment.}
We show that the second moment $\E[\actioncount_\Delta^2]$ is dominated by the squared mean.
Since $\actioncount_\Delta$ and $\E\brk[s]*{\actioncount_\Delta}$ are both restricted to the interval $[0, \Time]$, their distance is at most $\Time$. Thus, almost surely:
\[
(\actioncount_\Delta - \E\brk[s]*{\actioncount_\Delta})^2 \le \Time \abs{\actioncount_\Delta - \E\brk[s]*{\actioncount_\Delta}}.
\]
Taking expectations and applying the regime condition $\E\brk[s]{\abs{\actioncount(\action) - \Time/\actionsize} < \actionsize\sqrt{\Time}}$:
\begin{align*}
\Var[\actioncount_\Delta] &= \E\brk[s]*{(\actioncount_\Delta - \E\brk[s]*{\actioncount_\Delta})^2} \le \Time \sum_{\action \in \Delta} \E\brk[s]*{\abs{\actioncount_\action - \E\brk[s]*{\actioncount_\action}}} \\
& \leq \sum_{\action \in \Delta}\Time\E\brk[s]*{\abs*{\actioncount_\Time(\action)-\frac{\Time}{\actionsize}}}\leq 3\actionsize\Time\sqrt{\Time}
\end{align*}
The variance scales as $O(\Time^{1.5})$, which is asymptotically negligible compared to $\E\brk[s]*{\actioncount_\Delta}^2 \propto \Time^2$.
Implying that for large enough $\Time>\actionsize^6$ we will have 
\[
\E[\actioncount_\Delta^2]=\Var[\actioncount_\Delta] + \E[\actioncount_\Delta]^2 \leq 3\E[\actioncount_\Delta]^2
\]

\textbf{Completing the Bound.}
We now substitute these moments into the Hölder's ratio. Squaring both sides implies:
\[
\E\brk[s]{\abs{X}}^2 \ge \frac{\brk*{\E\brk[s]{X^2}}^3}{\E\brk[s]{X^4}} \geq \frac{(\frac{2}{\actionsize} \E\brk[s]*{\actioncount_\Delta})^3}{\frac{8C}{\actionsize^2} \; \E\brk[s]*{\actioncount_\Delta^2}}.
\]

Substituting the second moment bound into the denominator:
\[
\E\brk[s]{\abs{X}}^2\ge \frac{(\frac{2}{\actionsize} \E\brk[s]*{\actioncount_\Delta})^3}{\frac{8C}{\actionsize^2} \cdot3 \; \E\brk[s]*{\actioncount_\Delta}^2} \ge \frac{1}{3C\actionsize}\E\brk[s]*{\actioncount_\Delta} \ge \frac{1}{C\actionsize}(\frac{\Time}{\actionsize}-2\actionsize\sqrt{\Time}) = \Omega\left(\frac{\Time}{\actionsize^2}\right).
\]
Finally, taking the square root imply:
\[
\E\brk[s]{\abs{X}} = \Omega\left(\frac{\sqrt{\Time}}{\actionsize}\right).
\]

\textbf{Aggregation.}
Summing over all $\binom{\actionsize}{3}$ cycles and dividing by the overcounting factor $(\actionsize-2)$ (since each edge is part of $\actionsize-2$ triangles):
\[
\E\brk[s]*{\sum_{i<j}\abs*{\actiondelta_\Time(i,j)}}\;\ge\; \frac{1}{\actionsize-2} \sum_{\{i,j,k\}} \E\brk[s]*{\abs*{X_{i,j,k}}} 
\;=\; \frac{1}{\actionsize-2} \cdot \binom{\actionsize}{3}\E\brk[s]*{\abs*{X_{i,j,k}}}   
\;=\; \Omega(\actionsize\sqrt{\Time}).
\]

\subsection{SSG games}
We extend the lower bound to the Markovian setting by constructing a specific SSG instance with $|\State|$ states and episodes of length $\Horizon$.
Consider a game where the transition dynamics are uniform and independent of the current state or actions. Specifically, for any state $\state$, step $\horizon$, and action pair $(\action^1, \action^2)$:
\[
\transition(\state\mid\state,\action^1,\action^2)=\frac{1}{|\State|}
\]

In this construction, the expected visitation frequency of any state is a random variable determined solely by the transition, independent of the players' strategies.
Let $\Time_\state$ denote the random number of times state $\state$ is visited over the entire game horizon of $\Time$ episodes.
The expected number of visits is $\E[\Time_\state] = \frac{\Horizon\Time}{|\State|}$.

Since the states are decoupled by the uniform transitions, the game effectively decomposes into $|\State|$ independent repeated matrix games, where the effective time horizon for state $\state$ is $\Time_\state$.
Applying the reduction from \cref{eq:payoff-to-difference} locally to each state:
\begin{align*}
\E\brk[s]*{\abs*{\sum_{\timestep=1}^\Time\sum_{\horizon=1}^\Horizon\utility(\state_{\horizon,\timestep},\action^1_\timestep,\action^2_\timestep)}} 
&= \E\brk[s]*{\sum_{\state\in\State}\sum_{\action^1<\action^2}\abs*{\actiondelta_{\Time_\state}(\state_,\action^1,\action^2)}}
\end{align*}

We assume that $\Time$ is sufficiently large (specifically $\Time > \Omega(\actionsize^6 |\State|)$) such that for every state $\state$, the effective horizon $\Time_\state$ falls into the regime where the matrix game lower bound holds.
Summing the lower bounds for each state:
\begin{align*}
\E\brk[s]*{\abs*{\sum_{\timestep=1}^\Time\sum_{\horizon=1}^\Horizon\utility(\state_{\horizon,\timestep},\action^1_\timestep,\action^2_\timestep)}} 
&=|\State|\E\brk[s]*{\sum_{\action^1<\action^2}\abs*{\actiondelta_{\Time_\state}(\state,\action^1,\action^2)}}
= |\State| \cdot \Omega\left(\actionsize\sqrt{\frac{\Horizon\Time}{|\State|}}\right)
= \Omega\left(\actionsize\sqrt{\Horizon|\State|\Time}\right).
\end{align*}

\section{Matrix game as Blackwell’s Approachability}
\label{app:Blackwell’s-approachability-formulatio}

Blackwell Approachability is an extension of Von Neumann's minimax theorem \citep{von1928theory} to vector-valued payoffs.
The objective is to ensure, in repeated play, that the average payoff vector converges to a desirable convex target set $K$. 
Unlike in the scalar case, fixed (oblivious) strategies may fail to guarantee convergence. 
Blackwell’s celebrated \emph{Approachability Theorem} \citep{blackwell1956analog} shows that in the repeated game setting, an adaptive strategy that reacts to the opponent’s play can still guarantee convergence to $K$.

We first define the main concepts of approachability theory.
A Blackwell instance is a tuple $(\mathcal{X}, \mathcal{Y}, \Loss, \Utility)$ such that $\mathcal{X} \subset\Real^n$ is the action set of the learner, $\mathcal{Y} \subset \Real^m$ is the action set of the adversary. Both action sets are convex and compact. 
The set $\Utility \subset \Real^d$ is convex and closed, and the vector valued function $\Loss: \mathcal{X} \times \mathcal{Y} \mapsto \Real^d$ is bilinear.  

\begin{definition}[Approachable set]
    Given a Blackwell instance $(\mathcal{X}, \mathcal{Y}, \Loss, \Utility)$, we say that $\Utility$ is \emph{approachable} if there exists an algorithm $\alg$ that selects points in $x^1,\dots,x^T \in \mathcal{X}$ such that, for any sequence $y^1, y^2, \dots y^T \in \mathcal{Y}$, we have:
    \[
        \mathrm{dist} \brk*{\frac{1}{\Time}\sum_{\timestep=1}^\Time\Loss(x^\timestep,y^\timestep),\Utility}\to 0 \quad\text{as}\quad \Time \to \infty,
    \]
where $x^\timestep \gets \alg(y^1, y^2, \dots y^{\timestep-1})$.
\end{definition}

Finally, we state Blackwell’s Approachability Theorem.
\begin{theorem}(Approachability Theorem; \citealp{blackwell1956analog})
\label{theorem:Blackwell}
    For any Blackwell instance $(\mathcal{X}, \mathcal{Y}, \Loss, \Utility)$, $\Utility$ is approachable if and only if $\;\forall y \in \mathcal{Y} \;\;\exists x \in \mathcal{X} \;: \Loss(x, y) \in \Utility$.
\end{theorem}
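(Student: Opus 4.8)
The plan is to prove the two directions separately, treating the ``only if'' (necessity) direction as a short warm-up and reserving the bulk of the effort for the ``if'' (sufficiency) direction, which is where Blackwell's construction lives.

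For necessity, I would argue by contraposition. Suppose there is some $y_0 \in \mathcal{Y}$ with $\Loss(x, y_0) \notin \Utility$ for every $x \in \mathcal{X}$, and consider the adversary that plays $y_0$ in every round. Because $\Loss$ is bilinear and $\mathcal{X}$ is convex, every realizable average payoff satisfies $\frac{1}{\Time}\sum_{\timestep=1}^\Time \Loss(x^\timestep, y_0) = \Loss(\bar{x}, y_0)$ with $\bar{x} = \frac{1}{\Time}\sum_{\timestep=1}^\Time x^\timestep \in \mathcal{X}$. Thus every average lies in $\{\Loss(x, y_0) : x \in \mathcal{X}\}$, the image of a compact set under a map that is linear (hence continuous) in $x$, so this set is compact and, by assumption, disjoint from the closed set $\Utility$. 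A compact set and a disjoint closed set are separated by a positive distance, so $\mathrm{dist}(\cdot, \Utility)$ cannot tend to zero, contradicting approachability.

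For sufficiency, I would build an explicit steering strategy and control a squared-distance potential. Write $\bar{\ell}_\timestep = \frac{1}{\timestep}\sum_{s=1}^\timestep \Loss(x^s, y^s)$ for the running average and let $p_\timestep$ be the Euclidean projection of $\bar{\ell}_\timestep$ onto the convex closed set $\Utility$. The strategy chooses $x^{\timestep+1}$ so that the next instantaneous payoff is forced into the halfspace $\{v : \langle v - p_\timestep, \bar{\ell}_\timestep - p_\timestep\rangle \le 0\}$ separating $\bar{\ell}_\timestep$ from $\Utility$. Existence of such an $x^{\timestep+1}$ is the crux: I would set $g(x,y) = \langle \Loss(x,y) - p_\timestep, \bar{\ell}_\timestep - p_\timestep\rangle$, which is bilinear, and invoke the von Neumann minimax theorem to get $\min_x \max_y g(x,y) = \max_y \min_x g(x,y)$. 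Blackwell's condition supplies, for each fixed $y$, an $x$ with $\Loss(x,y) \in \Utility$, and the projection inequality forces $g(x,y) \le 0$ there; hence $\max_y \min_x g(x,y) \le 0$, and the minimizing $x^{\timestep+1}$ keeps $g(x^{\timestep+1}, y) \le 0$ simultaneously for every $y \in \mathcal{Y}$.

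The final step is the potential recursion. Using $\bar{\ell}_{\timestep+1} = \frac{\timestep}{\timestep+1}\bar{\ell}_\timestep + \frac{1}{\timestep+1}\Loss(x^{\timestep+1}, y^{\timestep+1})$ and the estimate $\mathrm{dist}(\bar{\ell}_{\timestep+1}, \Utility) \le \|\bar{\ell}_{\timestep+1} - p_\timestep\|$, I would expand the square, cancel the cross term via the halfspace guarantee $\langle \bar{\ell}_\timestep - p_\timestep, \Loss(x^{\timestep+1}, y^{\timestep+1}) - p_\timestep\rangle \le 0$, and bound the remaining term using a uniform radius $\|\Loss(x,y) - p_\timestep\| \le D$ from compactness. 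This yields $(\timestep+1)^2 d_{\timestep+1}^2 \le \timestep^2 d_\timestep^2 + D^2$ for $d_\timestep = \mathrm{dist}(\bar{\ell}_\timestep, \Utility)$, which telescopes to $d_\timestep^2 \le D^2/\timestep$ and hence $d_\timestep \to 0$ at rate $O(1/\sqrt{\timestep})$. The main obstacle is the sufficiency direction, and within it the minimax argument certifying a single steering action $x^{\timestep+1}$ that works against every adversary response; once that halfspace guarantee is secured, the potential telescoping is routine.
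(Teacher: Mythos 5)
Your proof is correct, but note that the paper itself never proves this statement: it is quoted as Blackwell's classical theorem (\citealp{blackwell1956analog}) and used as a black box in \cref{app:Blackwell’s-approachability-formulatio}, so there is no internal proof to compare against. What you have reconstructed is the standard proof of the convex-target case of Blackwell's theorem, and both halves are sound: the contrapositive argument for necessity (a constant adversary $y_0$ confines every achievable average to the compact set $\{\Loss(x,y_0) : x \in \mathcal{X}\}$, which is at strictly positive distance from the disjoint closed set $\Utility$), and for sufficiency the projection-and-halfspace steering strategy, where the minimax step certifies a single $x^{\timestep+1}$ satisfying $\langle \Loss(x^{\timestep+1},y) - p_\timestep, \bar{\ell}_\timestep - p_\timestep\rangle \le 0$ for all $y$, and the potential recursion $(\timestep+1)^2 d_{\timestep+1}^2 \le \timestep^2 d_\timestep^2 + D^2$ telescopes to the $O(1/\sqrt{\timestep})$ rate. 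Two minor points to tighten: first, the function $g(x,y)$ is biaffine over general compact convex sets, not a bilinear form over simplices, so the clean citation is Sion's minimax theorem (or an equivalent generalization) rather than von Neumann's matrix-game version; second, the uniform radius $D$ deserves a sentence, since $p_\timestep$ varies with $\timestep$ --- it follows because $\bar{\ell}_\timestep$ ranges in the bounded set $\mathrm{conv}\,\Loss(\mathcal{X},\mathcal{Y})$ and, for any fixed $u_0 \in \Utility$, $\|p_\timestep - \bar{\ell}_\timestep\| \le \|u_0 - \bar{\ell}_\timestep\|$, so $p_\timestep$ is bounded as well. Both are one-line fixes; the substance of the argument is complete.
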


\subsection{Formulation of the copycat problem as Blackwell's Approachability}

We next formulate the copycat problem as a Blackwell’s Approachability problem.
We define a Blackwell instance $(\Action,\Action,\Loss,\Utility)$, such that $\Action = \{1,\dots,\actionsize\}$ is the action set of both players. 
The target set $\Utility$ is the space of all symmetric matrices: 
\[
    \Utility = \{\utility \in \Real^{\actionsize \times \actionsize} : \utility(\action^1,\action^2)=\utility(\action^2,\action^1),\; \forall \action^1,\action^2 \in \Action^2)\}.
\]
We define $\Loss(\action^1,\action^2)$ as a matrix in $\Real^{\actionsize \times \actionsize}$ whose entries are 0 except for a single entry $\action^1,\action^2$ that equals 1.

\begin{lemma}
    The copycat problem is approachable.
\end{lemma}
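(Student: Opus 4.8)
The plan is to invoke Blackwell's Approachability Theorem (\cref{theorem:Blackwell}), which reduces approachability of the convex target set $\Utility$ to a single one-shot condition: it suffices to show that for every adversary action $y \in \mathcal{Y}$ there exists a learner action $x \in \mathcal{X}$ with $\Loss(x,y) \in \Utility$. Here the action sets $\mathcal{X} = \mathcal{Y}$ should be read as the mixed strategies $\Delta(\Action)$, so that they are convex and compact as the definition of a Blackwell instance requires, and $\Loss$ is taken to be the bilinear extension of the map defined on pure actions. Concretely, for $x, y \in \Delta(\Action)$ one has $\Loss(x,y)(\action^1,\action^2) = x(\action^1)\, y(\action^2)$, i.e. $\Loss(x,y)$ is the outer product $x y^\top$.

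First I would record this closed form for $\Loss$. Since $\Loss(\action^1,\action^2)$ is the matrix carrying a single $1$ in entry $(\action^1,\action^2)$, bilinearity yields $\Loss(x,y) = \sum_{\action^1,\action^2} x(\action^1)\, y(\action^2)\, \Loss(\action^1,\action^2)$, whose $(\action^1,\action^2)$ entry is exactly $x(\action^1)\, y(\action^2)$, confirming $\Loss(x,y) = x y^\top$.

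The key step is the copycat choice itself: fix an arbitrary adversary strategy $y \in \Delta(\Action)$ and set $x = y$. Then $\Loss(y,y)(\action^1,\action^2) = y(\action^1)\, y(\action^2) = y(\action^2)\, y(\action^1) = \Loss(y,y)(\action^2,\action^1)$, so $\Loss(y,y)$ is symmetric and hence lies in $\Utility$. This verifies the hypothesis of \cref{theorem:Blackwell} for every $y \in \mathcal{Y}$, and therefore $\Utility$ is approachable, which is precisely the claim.

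I expect no genuine analytic obstacle here; the entire content of the argument is the elementary observation that imitating the opponent's mixed strategy produces a symmetric outer-product matrix. The only point requiring care is the implicit passage from pure to mixed actions, which is what makes the Blackwell instance well posed and $\Loss$ genuinely bilinear. Once that interpretation is fixed, the choice $x = y$ discharges the theorem's condition immediately, and the lemma follows without further computation.
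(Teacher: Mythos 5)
Your proof is correct and follows essentially the same route as the paper: invoke Blackwell's Approachability Theorem and verify its condition via the copycat choice $x = y$, whose induced loss matrix is symmetric and hence lies in $\Utility$. The only difference is that you carefully pass to mixed strategies $\Delta(\Action)$ and the outer product $x y^\top$, whereas the paper checks the condition only at pure action pairs, so your write-up is if anything slightly more complete.
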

\begin{proof}
For each $\action^2$ we choose $\action^1=\action^2$ we get $\Loss(i,j) = 0$ for any $(i,j)$, except $\Loss(\action^2,\action^2) = 1$, which mean that $\Loss \in \Utility$. 
\end{proof}

Lastly, consider a repeated symmetric matrix game with payoff $\utility_\star$.
Then, since the problem is approachable:
\begin{align*}
    \abs*{\frac1\Time \mathbb{E}\brk[s]*{\sum_{t=1}^T \utility_\star(\action^1_\timestep,\action^2_\timestep)}}
    =
    \abs*{\utility_\star \cdot \mathbb{E}\brk[s]*{\frac1\Time \sum_{t=1}^T \Loss(\action^1_\timestep,\action^2_\timestep)}} 
    \to
    0.
\end{align*}

\section{Copycat type games are an Instance of Blackwell Approachability}

Prior work by \cite{abernethy2011blackwell, shimkin2016online} show that Blackwell Approachability and Online Convex Optimization (see \cref{sec:olo}), are algorithmically equivalent in the context of repeated games. 
Specifically, any algorithm that achieves approachability can be transformed into an online algorithm with low regret, and vice versa.

In the copycat setting, the learner does not observe the realized payoffs but knows that the payoff function belongs to a specific set $\Utility$. The objective is to guarantee performance comparable to the optimal strategy for any payoff function within this set. We leverage the equivalence mentioned above to formalize this problem as an instance of Blackwell Approachability. This formulation allows us to derive a general solution that is independent of the specific structure of $\Utility$.

While we detail the full Blackwell formulation for the matrix game case in \cref{app:Blackwell’s-approachability-formulatio}, for the general case we proceed by directly reducing the problem to online linear optimization.

We now present the main algorithm:
\begin{algorithm}[H]
   \caption{Playing Games with Unobserved Payoffs}
   \label{alg:OCO-main}
\begin{algorithmic}[1] 
    \STATE {\bfseries Input:} online linear optimization algorithm $\alg$.
    \FOR{$\timestep=1$ {\bfseries to} $\Time$}
        \STATE Receive $\utility_\timestep$ from $\alg$.
        \STATE Compute: 
        \begin{align*}
            &\strategy_\timestep^1 = \argmax_{\strategy^1} \min_{\strategy^2} \E^{\strategy^1,\strategy^2}\brk[s]*{\sum_{\horizon=1}^\Horizon\utility_t(\state_\horizon,\strategy^1,\strategy^2)}
        \end{align*}
        \STATE Play $\strategy_\timestep^1$. 
        Observe trajectory \\
        $(\state_{1,\timestep},\action_{1,\timestep}^1,\action_{1,\timestep}^2,\dots,\state_{\Horizon, \timestep},\action_{\Horizon,\timestep}^1,\action_{\Horizon,\timestep}^2)$.
        \STATE Construct 
        \begin{align*}
        &\loss_\timestep
        =
        \begin{cases}        \loss_\timestep(\state,\action^1,\action^2) = 1, &\action^1=\action_{\horizon,\timestep}^1,\action^2=\action_{\horizon, \timestep}^2, \\
        &\state=\state_\horizon \; \forall 1\leq\horizon\leq \Horizon; \\
        \loss_\timestep(\state, \action^1,\action^2) = 0, &\text{otherwise}.
        \end{cases}
        \end{align*} \label{ln:loss-construction}
        \STATE Feed $\loss_\timestep$ to $\alg$.
    \ENDFOR
\end{algorithmic}
\end{algorithm}
We conceptualize the algorithm as involving a third "virtual player" who observes the interaction between the learner and the adversary.
At the beginning of each round $\timestep$, this virtual player selects a payoff function $\utility_\timestep \in \Utility$.

The virtual player's goal is to minimize their cumulative loss. Given a realized trajectory $\tau=(\state_1,\action_1^1,\action_1^2,\dots,\state_\Horizon,\action_\Horizon^1,\action_\Horizon^2)$, the loss for episode $\timestep$ is defined as the total payoff under the selected function: $\sum_{\horizon=1}^\Horizon\utility_\timestep(\state_\horizon,\action_\horizon^1,\action_\horizon^2)$

To optimize this objective, the virtual player employs an online linear optimization algorithm $\alg$ over the set $\Utility$.
At the end of round $\timestep$, $\alg$ receives a feedback loss tensor $l_\timestep \in \mathbb{R}^{|\State| \times \actionsize \times \actionsize}$.
This tensor acts as a sparse indicator of the trajectory: its entries are set to $1$ for the $\Horizon$ state-action triplets visited in $\tau$ and $0$ otherwise (corresponding to Line 6 in \cref{alg:OCO-main}).

In this manner we represent the loss as a linear function of the payoff matrix: $l_\timestep \cdot \utility_\timestep = \sum_{\horizon=1}^\Horizon\utility_\timestep(\state_\horizon,\action_\horizon^1,\action_\horizon^2)$.

Note that $\|\utility\|_F \le \actionsize\sqrt{|\State|}$ for all $\utility \in \Utility$ and that $\|\loss_\timestep\|_F \le \sqrt{\Horizon}$ for all $\timestep=1,\dots,\Time$.
Using OGD as our online algorithm $\alg$ yields:
\begin{equation}
    \label{eq:ogd-SSG-regret}
    \sum_{\timestep=1}^\Time \utility_\timestep \cdot \loss_\timestep
    -
    \min_{\utility \in \Utility} \sum_{\timestep=1}^\Time \utility \cdot \loss_\timestep 
    =
    O(\actionsize \sqrt{\Time|\State|\Horizon}).
\end{equation}

The learner's strategy is to select a safety-level strategy $\strategy_\timestep^1$ based as if $\utility_\timestep$ is the payoff function for the game. 
As $\utility_\timestep \in \Utility$, the value of the game is zero, guaranteeing a non-negative expected payoff for the learner:
\begin{align}
    \label{eq:markov-safety-level-bound}
    \mathbb{E}^{\strategy_\timestep^1,\strategy_\timestep^2}\brk[s]
    *{\sum_{\horizon=1}^\Horizon \utility_\timestep(\state_{\horizon,\timestep}, \action^1_{\horizon,\timestep},\action^2_{\horizon,\timestep})} \geq 0,
\end{align}
regardless of the strategy $\strategy_\timestep^2$ chosen by the adversary.
Now indeed suppose that the virtual player uses OGD.
Let $\utility_\star$ represent the actual payoffs of the game. 
We bound the learner's playoff as follows:
\begin{align*}
    \abs*{\E \brk[s]*{\sum_{\timestep=1}^\Time \Value^{\strategy^1_\timestep,\strategy^2_\timestep}(\state_1)}}
    &= -\min_{\utility\in\{\utility_\star,-\utility_\star\}}\E\brk[s]*{\sum_{\timestep=1}^\Time\sum_{\horizon=1}^\Horizon\utility(\state_{\horizon,\timestep}, \action_{\horizon,\timestep}^1,\action_{\horizon,\timestep}^2)} \tag{\cref{assum:SSG}}\\
    &\leq
    -\min_{\utility \in \Utility} \E \brk[s]*{\sum_{\timestep=1}^\Time\sum_{\horizon=1}^\Horizon\utility(\state_{\horizon,\timestep},\action^1_{\horizon,\timestep},\action^2_{\horizon,\timestep})} 
    \tag{$\utility_\star, -\utility_\star \in \Utility$} \\
    &\leq
    \E \brk[s]*{-\min_{\utility \in \Utility} \sum_{\timestep=1}^\Time\sum_{\horizon=1}^\Horizon\utility(\state_{\horizon,\timestep},\action^1_{\horizon,\timestep},\action^2_{\horizon,\timestep})} 
    \tag{Jensen's inequality} \\
    &\leq
    \E \Bigg[\sum_{\timestep=1}^\Time\sum_{\horizon=1}^\Horizon \utility_\timestep(\state_{\horizon,\timestep},\action_{\horizon,\timestep}^1,\action_{\horizon,\timestep}^2)\\
    &\qquad-
    \min_{\utility \in \Utility} \sum_{\timestep=1}^\Time\sum_{\horizon=1}^\Horizon\utility(\state_{\horizon,\timestep},\action^1_{\horizon,\timestep},\action^2_{\horizon,\timestep})
    \Bigg]
    \tag{\cref{eq:markov-safety-level-bound}} \\
    &=
    O(\actionsize \sqrt{\Time|\State|\Horizon}).
    \tag{\cref{eq:ogd-SSG-regret}}
\end{align*}

Furthermore, we claim that \cref{alg:OCO-main} runs in polynomial time. 
Indeed, computing a safety-level strategy is known to be polynomial \citep{wal1976markov}, and the remaining steps of the algorithm are similarly efficient.
Thus, the runtime analysis reduces to ensuring that OGD over $\Utility$ runs in polynomial time, which suffices to show that the Euclidean projection onto $\Utility$ is efficient.
For SSG games (\cref{assum:SSG}), this projection is trivial.

\section{Proof of \cref{lem:shorter-game}}
\label{app:shorter-game-proof}
\begin{proof}
    We look at $\strategy^1, \hat{\strategy}^1, \strategy^2, \hat{\strategy}^2 \in \Strategy_M$ under the special case where $\strategy^1 = \hat{\strategy}^1$ and $\strategy^2=\hat{\strategy}^2$ for all $\horizon$ except $\Horizon$ and for all states except $\state \in \State_\Horizon$.
    In this case:
\begin{align*}
    &\Value^{\strategy^1, \strategy^2}(\state_1) - \Value^{\hat{\strategy}^1, \hat{\strategy}^2}(\state_1)  \\
    &\qquad =
    \p^{\strategy^1, \strategy^2}[\state_{\Horizon}=\state]\brk*{\E^{\strategy^1,\strategy^2}\brk[s]*{\utility(\state, \action_\Horizon^1, \action_\Horizon^2) \mid \state_\Horizon = \state} 
    - 
    \E^{\hat{\strategy}^1,\hat{\strategy}^2}\brk[s]*{\utility(\state, \action_\Horizon^1, \action_\Horizon^2) \mid \state_\Horizon = \state}}.
\end{align*}

\cref{assum:MSG} implies:
\begin{align*}
    &\Value^{\strategy^1, \strategy^2}(\state_1) - \Value^{\hat{\strategy}^1, \hat{\strategy}^2}(\state_1) + \Value^{\strategy^2, \strategy^1}(\state_1) - \Value^{\hat{\strategy}^2, \hat{\strategy}^1}(\state_1) \\
    &\qquad = 
    \brk{\underbrace{\Value^{\strategy^1, \strategy^2}(\state_1) + \Value^{\strategy^2, \strategy^1}(\state_1)}_{=0}}
    - 
    \brk{\underbrace{\Value^{\hat{\strategy}^1, \hat{\strategy}^2}(\state_1) + \Value^{\hat{\strategy}^2, \hat{\strategy}^1}(\state_1)}_{=0}} \\
    &\qquad =0. 
\end{align*}
We combine both observations to get:
\begin{align}
\label{eq:Item-1-final}
    &\p^{\strategy^1, \strategy^2}[\state_{\Horizon}=\state](\E^{\strategy^1,\strategy^2}\brk[s]*{\utility(\state, \action_\Horizon^1, \action_\Horizon^2)\mid \state_\Horizon = \state} - \E^{\hat{\strategy}^1,\hat{\strategy}^2}\brk[s]*{\utility(\state, \action_\Horizon^1, \action_\Horizon^2)\mid \state_\Horizon = \state}) \\
    &\qquad=
    -\p^{\strategy^2, \strategy^1}[\state_{\Horizon}=\state] (\E^{\strategy^2,\strategy^1}\brk[s]*{\utility(\state, \action_\Horizon^1, \action_\Horizon^2)\mid \state_\Horizon = \state} - \E^{\hat{\strategy}^2,\hat{\strategy}^1}\brk[s]*{\utility(\state, \action_\Horizon^1, \action_\Horizon^2)\mid \state_\Horizon = \state}). 
    \nonumber
\end{align}
Now we set $\strategy^1,\strategy^2$ such that they will always choose actions 1 in $\state \in \State_\Horizon$.
Formally, $
\p^{\strategy^1}[\action_\Horizon=1 \mid \state_\Horizon=\state]
=
\p^{\strategy^2}[\action_\Horizon=1\mid\state_\Horizon=\state]
=
1$.
Plugging these policies into \cref{eq:Item-1-final} gets: 
\begin{align}
\label{eq:Fina-Round-Gaps}
    &\p^{\strategy^1, \strategy^2}[\state_{\Horizon}=\state]\brk*{\utility(\state, 1, 1) - \E^{\hat{\strategy}^1,\hat{\strategy}^2}\brk[s]*{\utility(\state, \action_\Horizon^1, \action_\Horizon^2)\mid \state_\Horizon = \state}}
    =\\
    &\qquad\qquad-
    \p^{\strategy^2, \strategy^1}[\state_{\Horizon}=\state] \brk*{\utility(\state, 1, 1)-\E^{\hat{\strategy}^2,\hat{\strategy}^1}\brk[s]*{\utility(\state, \action_\Horizon^1, \action_\Horizon^2)\mid \state_\Horizon = \state}}.
    \nonumber
\end{align}
Next we choose $\strategy^1,\strategy^2$ that choose actions uniformly at random everywhere except in $\state \in \State_\Horizon$.
Formally:
$\p^{\strategy^i}[\action=\action_\horizon \mid \state_\horizon] 
= \frac{1}{|\Action|}$ for all $h \in [\Horizon]$, $\state_\horizon \in \State_\horizon \setminus \{\state\}$ and $i=1,2$. 
By the choice of our policies, we have that $\p^{\strategy^1, \strategy^2}[\state_{\Horizon}=\state]=\p^{\strategy^2, \strategy^1}[\state_{\Horizon}=\state] > 0$, from which
\[
    \utility(\state, 1, 1)+\utility(\state, 1, 1)
    =
    \E^{\hat{\strategy}^1,\hat{\strategy}^2}\brk[s]*{\utility(\state, \action_\Horizon^1, \action_\Horizon^2)\mid \state_\Horizon = \state}
    +
    \E^{\hat{\strategy}^2,\hat{\strategy}^1}\brk[s]*{\utility(\state, \action_\Horizon^1, \action_\Horizon^2)\mid \state_\Horizon = \state},
\]
then for any discrete $\hat{\strategy}^2,\hat{\strategy}^1$ this complete the proof for \cref{it:last-round-symmetric}.

Now we rearrange \cref{eq:Fina-Round-Gaps}:
\begin{align}
            \p^{\strategy^1,\strategy^2}[\state_\Horizon=\state]\utility(\state, \action^1, \action^2)
            + 
            \p^{\strategy^2,\strategy^1}[\state_\Horizon=\state]\utility(\state, \action^1, \action^2)
            =
            \p^{\strategy^1,\strategy^2}[\state_\Horizon=\state] \; \utility(\state, 1, 1)
            +
            \p^{\strategy^2,\strategy^1}[\state_\Horizon=\state] \; \utility(\state, 1, 1)
\end{align}
For simplicity we will define:
\[
    \Payoff(\utility,\strategy^1,\strategy^2,\horizon) 
    \coloneqq 
    \E^{\strategy^1,\strategy^2}\brk[s]*{\sum_{\horizon'=1}^{\horizon}\utility(\state_{\horizon'}, \action^1_{\horizon'}, \action^2_{\horizon'})} 
    + 
    \E^{\strategy^2,\strategy^1}\brk[s]*{\sum_{\horizon'=1}^{\horizon}\utility(\state_{\horizon'}, \action^1_{\horizon'}, \action^2_{\horizon'})}.
\]

We begin our proof by considering the payoff at step $\Horizon$ for policies $\strategy_1,\strategy_2 \in \Strategy_M$:
\begin{alignat*}{2}
        \Payoff(\utility,\strategy_1, \strategy_2, \Horizon) 
        &=
        \Payoff(\utility,\strategy_1,\strategy_2,\Horizon-1)
        &&+
        \E^{\strategy_1, \strategy_2}\brk[s]*{\utility(\state_\Horizon, \action_\Horizon^1,\action_\Horizon^2)} 
        +
        \E^{\strategy_2, \strategy_1} \brk[s]*{\utility(\state_\Horizon, \action_\Horizon^1, \action_\Horizon^2)} \\
        &=
        \Payoff(\utility, \strategy_1,\strategy_2,\Horizon-1)
        &&+
        \sum_{\state\in\State_\Horizon}\p^{\strategy^1, \strategy^2}[\state_\Horizon=\state] \E^{\strategy_1,\strategy_2}\brk[s]*{\utility(\state, \action_\Horizon^1, \action_\Horizon^2) \mid 
        \state_\Horizon = \state}\nonumber\\
        & &&+
        \sum_{\state\in\State_\Horizon} \p^{\strategy^2, \strategy^1}[\state_\Horizon=\state]\E^{\strategy_2,\strategy_1}\brk[s]*{\utility(\state, \action_\Horizon^1, \action_\Horizon^2)\mid 
        \state_\Horizon = \state}\nonumber\\
        &=
        \Payoff(\utility,\strategy_1,\strategy_2,\Horizon-1) 
        &&+
        \sum_{\state\in\State_\Horizon}\p^{\strategy^1,\strategy^2}[\state_\Horizon=\state]\utility(\state,1,1) \nonumber \\
        & &&+ 
        \sum_{\state\in\State_\Horizon}\p^{\strategy^2,\strategy^1}[\state_\Horizon=\state]\utility(\state,1,1).
        \tag{\cref{eq:Fina-Round-Gaps}}\\
\end{alignat*}
Expanding the last two summands using the Markov property:
\begin{align*}
    &\sum_{\state\in\State_\Horizon}\p^{\strategy_1,\strategy_2}[\state_\Horizon=\state]\utility(\state,1,1) + \sum_{\state\in\State_\Horizon} \p^{\strategy_2,\strategy_1}[\state_\Horizon=\state]\utility(\state,1,1) \\
    &\qquad=
    \E^{\strategy_1,\strategy_2}\brk[s]*{\sum_{\state\in\State}\transition(\state\mid\state_{\Horizon-1}, \action^1_{\Horizon-1},\action^2_{\Horizon-1})\utility(\state,1,1)}
    + 
    \E^{\strategy_2,\strategy_1}\brk[s]*{\sum_{\state\in\State}\transition(\state\mid\state_{\Horizon-1}, \action^1_{\Horizon-1},\action^2_{\Horizon-1})\utility(\state,1,1)}.
\end{align*}
Finally, substituting this back into the original expression for the payoff:
\[
    \Payoff(\utility,\strategy_1, \strategy_2, \Horizon) =\Payoff(\hat\utility,\strategy_1,\strategy_2,\Horizon-1),
\]
We can clearly see that the left side must hold \cref{assum:MSG} hence the right side must as well, completing our proof.

\end{proof}

\section{Rearranging The HSG Assumption}
\label{app:Rearranging-HSG}
In order to simplify our derivation, we formalize the HSG game differently.
We create a new state space $\History$ in which every state represent a history in the original HSG, namely $\History = \bigsqcup_{\horizon=1}^{\Horizon} \History_\horizon$, where $\History_\horizon = \{(\state_1,\action_1^1,\action_1^2 \dots\state_\horizon) : \forall \state_1\dots\state_\horizon \in \State \; \text{and} \; \action_1^1,\action_1^2\dots \action_{\horizon-1}^1,\action_{\horizon-1}^2 \in \Action \}$. 
The action space and the horizon remains the same. The initial state is $\history_1 = (\state_1)$.
The dynamics in the new game are\footnote{We overload the notation $\transition$ and $\utility$ using histories to refer to the those of the new game.}
\[
    \transition(\history' \mid \history, \action^1,\action^2)
    =
    \begin{cases}
        \transition(\state_{\horizon+1} \mid \state_\horizon,\action^1,\action^2), &z = (\state_1,\action_1^1,\action_1^2 \dots\state_\horizon), \\
        &\history'=(\state_1,\action_1^1,\action_1^2 \dots\state_\horizon,\action^1,\action^2, \state_{\horizon+1}); \\
        0, &\text{otherwise}.
    \end{cases}
\]
Lastly, the modified payoff function is:
\[
    \utility((\state_1,\action_1^1,\action_1^2 \dots\state_\horizon),\action^1,\action^2) = \utility(\state_\horizon,\action^1,\action^2).
\]
The definition of the value function of the new game follows immediately from our reduction.
Lastly, since history-dependent policies in the original game are Markov in the new game, we can analyze our new rearranged HSG game similarly to the analysis done on MSG. 
\section{Proof of \cref{lem:EqualValueFunctions}}
\label{app:EqualValueFunctions}
\begin{lemma*}[Restatement of \cref{lem:EqualValueFunctions}]
    Let $\history_2 \in \History_2$ be a history up to time 2, namely of the form $\history_2=(\state_1,\action^1_1,\action_1^2,\state_2)$. 
    Let $\Strategy_{H}$ denotes the set of all history-dependent policies.
    For any $\strategy^1,\strategy^2,{\strategy^1}',{\strategy^2}' \in \Strategy_H$, the value at $\history_2$ is identical:
    \begin{align*}
        \Value^{\strategy^1,\strategy^2}(\history_2)=\Value^{{\strategy^1}',{\strategy^2}'}(\history_2).
    \end{align*}
\end{lemma*}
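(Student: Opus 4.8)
The plan is to carry out the whole argument inside the rearranged game of \cref{app:Rearranging-HSG}, where each history is a state and the HSG becomes an MSG with respect to (rearranged-)Markov policies. The first thing I would record is a subgame-invariance property: since the payoff $\utility((\state_1,\action^1,\action^2,\state_2,\dots),\cdot,\cdot)$ and the transitions out of a history depend only on the current underlying state and not on the prefix, the value $\Value^{\strategy^1,\strategy^2}(\history_2)$ at a time-$2$ history $\history_2=(\state_1,\action^1,\action^2,\state_2)$ depends only on $\state_2$ and on the continuation policies $\mu^1,\mu^2$; denote this common quantity $W^{\mu^1,\mu^2}(\state_2)$. It then suffices to show that $W^{\mu^1,\mu^2}(\state_2)$ is the same for all continuation policies.

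The engine of the proof is the HSG identity at $\state_1$ applied to history-dependent policies that are deterministic in the first round. For policies that play $(\action^1,\action^2)$ in round~$1$ and then follow arbitrary continuations $\mu^1,\mu^2$, decomposing $\Value^{\strategy^1,\strategy^2}(\state_1)=-\Value^{\strategy^2,\strategy^1}(\state_1)$ through the first transition yields a relation between the first-round payoffs and the $\transition$-weighted continuation values, with the two players' continuations appearing swapped on the two sides. The crucial extra leverage, absent in the MSG setting, is that a continuation may be made to depend on the realized round-$1$ actions. Choosing policies that condition the continuation asymmetrically on the round-$1$ history, which is exactly the probing idea behind \cref{alg:Non-Symetric-PolicyCreator}, lets me isolate individual successor states $\state_2$ and decouple $W^{\mu^1,\mu^2}(\state_2)$ from $W^{\mu^2,\mu^1}(\state_2)$.

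I would then invoke the structure already developed for MSG. Because the rearranged game is MSG it is LRSG (\cref{lem:MSG-is-LRSG}), and every history $(\state_1,\action^1,\action^2,\state_2)$ with $\action^1\neq\action^2$ has \emph{asymmetric} state-visitation under a player swap (two policies deterministically selecting $\action^1$ and $\action^2$ reach it with unequal probability once the players are exchanged), so \cref{eq:equal} applies at that history. Combined with the equal-sum relation \cref{eq:equal-sum}, this forces the reduced one-step payoff at $\state_2$ to be constant in the actions, which is exactly policy-independence of the continuation value there. Reducing the horizon by one using \cref{lem:shorter-game} (valid since LRSG is preserved under the reduction, as in the MSG analysis) and iterating $\Horizon-2$ times collapses the game to a two-round game whose last round is the original time~$2$; applying \cref{eq:equal,eq:equal-sum} to that last round then pins $W^{\mu^1,\mu^2}(\state_2)$ to a constant, giving $\Value^{\strategy^1,\strategy^2}(\history_2)=\Value^{{\strategy^1}',{\strategy^2}'}(\history_2)$.

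The step I expect to be the main obstacle is establishing \emph{constancy}, rather than merely the equal-sum relation, of the continuation value at underlying states that are reachable only through ``diagonal'' action pairs $\action^1=\action^2$. For such states no first-round asymmetry can be induced, so \cref{eq:equal} cannot be triggered directly and the naive bookkeeping yields only \cref{eq:equal-sum}; closing this case is where the full strength of history-dependent policies must be used, by transporting the asymmetry obtained at deeper rounds back through the transition structure. A second delicate point is that \cref{lem:shorter-game} preserves the symmetrized payoff $\Payoff$ rather than the per-history value, so care is needed to turn reduced-payoff constancy into genuine policy-independence of each $\Value(\history_2)$; making this translation airtight is the technical heart of the argument.
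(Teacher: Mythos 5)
Your route is genuinely different from the paper's, and as written it does not close. The paper's proof of \cref{lem:EqualValueFunctions} never invokes the MSG machinery (\cref{lem:MSG-is-LRSG}, \cref{lem:shorter-game}, \cref{alg:Non-Symetric-PolicyCreator}); it is a short, direct argument on the stage-2 expansion of the HSG identity (\cref{eq:HSG-second-stage}). The idea your plan is missing is this: because policies are history-dependent, the two players' continuations after one \emph{fixed} history $\history_2'$ can be altered without changing any other term in that expansion, so for any fixed first-round behavior $(\rho^1,\rho^2)$ the weighted pair $\p^{\rho^1,\rho^2}[\history_2']\,\Value^{\strategy^1,\strategy^2}(\history_2')+\p^{\rho^2,\rho^1}[\history_2']\,\Value^{\strategy^2,\strategy^1}(\history_2')$ is independent of the continuation pair $(\strategy^1,\strategy^2)$. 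Instantiating $(\rho^1,\rho^2)$ as uniform play (equal weights) shows $\Value^{\strategy^1,\strategy^2}(\history_2')+\Value^{\strategy^2,\strategy^1}(\history_2')$ is a constant $G(\history_2')$; instantiating it as the perturbed policy of \cref{eq:state-prob-diff} (unequal weights) forces $\Value^{\strategy^1,\strategy^2}(\history_2')=\Value^{\strategy^2,\strategy^1}(\history_2')$; combining pins every value to $G(\history_2')/2$. No horizon reduction and no last-round analysis is needed, and arbitrary continuations are handled at once because the values enter only as black-box terms.

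The two ``delicate points'' you defer are in fact the entire content of the lemma, and your plan supplies a mechanism for neither. First, \cref{lem:shorter-game} controls only the swap-paired aggregate $\Payoff(\utility,\strategy^1,\strategy^2,\Horizon)$ measured from $\state_1$, whereas the statement compares $\Value^{\strategy^1,\strategy^2}(\history_2)$ with $\Value^{{\strategy^1}',{\strategy^2}'}(\history_2)$ for four unrelated policies at one fixed history; moreover the reduced payoff hard-codes ``play action $1$ afterwards'' continuations, so its constancy in actions (via \cref{eq:equal,eq:equal-sum}) says nothing about arbitrary continuations once $\Horizon>2$. Second, the diagonal case you flag as the ``main obstacle'' cannot be closed by transporting asymmetry from deeper rounds, because there the claimed constancy can genuinely fail: take $\Horizon=2$, let diagonal first-round pairs lead to a state $x$ and off-diagonal pairs to a state $y$, with $\utility(\state_1,\cdot,\cdot)=\utility(y,\cdot,\cdot)=0$ and $\utility(x,\cdot,\cdot)$ a nonzero skew-symmetric matrix. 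Every swap-sum of values cancels (diagonal histories have swap-symmetric visitation probabilities, and their continuation values negate under swapping by skew-symmetry), so this game satisfies \cref{assum:HSG}, yet $\Value^{\strategy^1,\strategy^2}((\state_1,\action,\action,x))$ visibly depends on the continuation policies. Your instinct about where the difficulty sits is exactly right --- note that the paper's own perturbation step \cref{eq:state-prob-diff} implicitly requires $\action_1^1\neq\action_1^2$ --- but the resolution cannot be a cleverer policy construction; on diagonal-only histories the statement itself breaks down, so no completion of your argument (or of any argument) exists for that case.
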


Before we start the proof of \cref{lem:EqualValueFunctions}, for clarity, define the following shorthand for the expected utility at stage 1 under a pair of strategies:
\begin{align*}
    \utility_1(\strategy^1,\strategy^2) \coloneqq \E^{\strategy^1,\strategy^2}\brk[s]*{\utility(\history_1,\action_1^1,\action_1^2)}
\end{align*}

\begin{proof}
We start by analyzing the HSG structure under strategies $\strategy^1, \strategy^2$ at the second stage:
\begin{equation}
\label{eq:HSG-second-stage}
    \utility_1(\strategy^1,\strategy^2) + \sum_{\history_2\in\History_2}\p(\history_2\mid\strategy^1,\strategy^2)\Value^{\strategy^1,\strategy^2}(\history_2)+
    \utility_1(\strategy^2,\strategy^1)+ \sum_{\history_2\in\History_2}\p(\history_2\mid\strategy^2,\strategy^1)\Value^{\strategy^2,\strategy^1}(\history_2)
        = 
        0.
\end{equation}
Lets start our analysis with a uniform policy $\strategy^\star$ on state $\history_1$.
Formally, $\p^{\strategy^\star}[\action_1=\action\mid \history_1]=\frac{1}{|\Action|}$. 
If both players follow this uniform policy, then for any second step history $\history_2 \in \History_2$, it must hold that $\p^{\strategy^\star,\strategy^\star}[\history_2] > 0$. 
Recall that the players' actions are part of $\history_2$, and for a concrete history of the form $\history_2 = (\state_1, \action_1^1, \action_1^2, \state_2)$, we have:
\[
    \p^{\strategy^\star,\strategy^\star}[\history_2]
    =
    \p^{\strategy^\star}[\action=\action_1^1\mid \history_1]
    \cdot
    \p^{\strategy^\star}[\action=\action_1^2 \mid \history_1]
    \cdot
    \transition(\state_2\mid\state_1,\action_1^1,\action_1^2).
\]
Implying that we can construct a modified policy $\hat{\strategy}^\star$ by altering $\strategy^\star$ only for actions in $\history_1$. Specifically, we increase the probability of selecting action $\action_1^1$ slightly, while decreasing the probability of selecting action $\action^2_1$ accordingly.
If the change is sufficiently small, then there exists such a policy $\hat{\strategy}^\star$ for which the following strict inequality holds:
\begin{equation}
\label{eq:state-prob-diff}
\p^{\hat{\strategy}^\star,\strategy^\star}[\history_2]
>
\p^{\strategy^\star,\strategy^\star}[\history_2]
>
\p^{\strategy^\star,\hat{\strategy}^\star}[\history_2] 
> 
0.
\end{equation} 
Now, fix a particular history $\history_2' \in \History_2$, and define $\History' \coloneqq \History \setminus \{\history_2'\}$, we rearrange \cref{eq:HSG-second-stage} and get: 
\begin{align*}
    &\p^{\strategy^1,\strategy^2}[\history_2 = \history_2']\Value^{\strategy^1,\strategy^2}(\history_2')
    + 
    \p^{\strategy^2,\strategy^1}[\history_2 = \history_2']\Value^{\strategy^2,\strategy^1}(\history_2') \\
    &=-
    \Bigg(\utility_1(\strategy^1,\strategy^2) + \utility_1(\strategy^2,\strategy^1) 
    +
    \sum_{\hat \history_2\in\History'} \big(\p^{\strategy^1,\strategy^2}[\history_2 = \hat \history_2]\Value^{\strategy^1,\strategy^2}(\history_2) \\
    &\qquad + 
    \p^{\strategy^2,\strategy^1}[\history_2 = \hat \history_2]\Value^{\strategy^2,\strategy^1}(\history_2)\big)\Bigg).
\end{align*}
Notice that only the left-hand side of the equation depends on $\history_2'$. In other words, since our policies are history-dependent, we can modify $\strategy^1$ and $\strategy^2$ conditioned on $\history_2'$ without affecting the right-hand side.

We leverage this observation in conjunction with selecting the uniform policy solely at $\history_1$ as follows:

\begin{align*}
    &\p^{\strategy^\star,\strategy^\star}[\history_2=\history_2']\Value^{\strategy^1,\strategy^2}(\history_2')
    + 
    \p^{\strategy^\star,\strategy^\star}[\history_2=\history_2']\Value^{\strategy^2,\strategy^1}(\history_2') \\
    & \qquad=
    \p^{\strategy^\star,\strategy^\star}[\history_2=\history_2']\Value^{\strategy^3,\strategy^4}(\history_2')
    + 
    \p^{\strategy^\star,\strategy^\star}[\history_2=\history_2']\Value^{\strategy^4,\strategy^3}(\history_2'),
\end{align*}
for any policies $\strategy^1,\strategy^2,\strategy^3,\strategy^4$, from which we obtain
\[
    \Value^{\strategy^1,\strategy^2}(\history_2')+ \Value^{\strategy^2,\strategy^1}(\history_2') 
    =
    \Value^{\strategy^3,\strategy^4}(\history_2')+ \Value^{\strategy^4,\strategy^3}(\history_2') \coloneqq G(\history_2').
\]
This shows that the sum of the value functions is constant for any pair of policies at each second-stage history $\history_2' \in \History_2$.

Now, instead of using the uniform policy at $\history_1$, we apply the modified policy described in \cref{eq:state-prob-diff} to ensure that the difference in visitation probabilities is nonzero. 
This, in turn, implies that for any pair of policies, the following holds:
\begin{align*}  
    &\p^{\hat{\strategy}^\star,\strategy^\star}[\history_2=\history_2']\Value^{\strategy^1,\strategy^2}(\history_2')
    + \p^{\strategy^\star,\hat{\strategy}^\star}[\history_2=\history_2']\Value^{\strategy^2,\strategy^1}(\history_2') \\
    & \qquad=
    \p^{\hat{\strategy}^\star,\strategy^\star}[\history_2=\history_2']\Value^{\strategy^2,\strategy^1}(\history_2')
    + \p^{\strategy^\star,\hat{\strategy}^\star}[\history_2=\history_2']\Value^{\strategy^1,\strategy^2}(\history_2') \\
    &\implies 
    (\p^{\hat{\strategy}^\star,\strategy^\star}[\history_2=\history_2'] - \p^{\strategy^\star,\hat \strategy^\star}[\history_2=\history_2'])
    (\Value^{\strategy^1,\strategy^2}(\history_2')
    - 
    \Value^{\strategy^2,\strategy^1}(\history_2')) = 0 \\
    &\implies 
    \Value^{\strategy^1,\strategy^2}(\history_2') =
    \Value^{\strategy^2,\strategy^1}(\history_2'),
\end{align*}

By combining the observation that $\Value^{\strategy^1, \strategy^2}(\history_2') = \Value^{\strategy^2, \strategy^1}(\history_2')$ for all $\history_2' \in \History_2$, with the fact that the sum of these values is constant across all policy pairs, we conclude:
\[
\Value^{\strategy^1,\strategy^2}(\history_2') 
=
\Value^{\strategy^2,\strategy^1}(\history_2')
=
\frac{1}{2}G(\history_2')
=
\Value^{\strategy^3,\strategy^4}(\history_2') 
=
\Value^{\strategy^4,\strategy^3}(\history_2'),
\]
completing our proof.
\end{proof}

\section{Online linear optimization} \label{sec:olo}
Online linear optimization is a sequential game between a learner and the environment played for $\Time$ rounds.
Before the game begins, the environment privately generates a sequence of linear loss functions $l_1,\dots,l_\Time : \Utility \mapsto \Real$ where $\Utility \subseteq \Real^m$ is a compact convex set. 
At every round $\timestep$, the learner predicts $\utility_\timestep \in \Utility$. 
Only then, the environment reveals $l_\timestep$ and the learner suffers $l_\timestep(u_\timestep)$.
The regret of the learner is defined as 
\[
    \Regret(\Time) 
    \coloneqq 
    \sum_{\timestep=1}^\Time l_\timestep(\utility_\timestep)
    -
    \min_{\utility \in \Utility} \sum_{\timestep=1}^\Time l_\timestep(\utility).
\]

A learning algorithm is said to be consistent if its regret grows sub-linearly with time, i.e., $\Regret(\Time) \leq o(\Time)$. 
This ensures that, on average, the algorithm performs asymptotically as well as the best fixed strategy in hindsight.

\paragraph{Online Gradient Descent (OGD).}
OGD refers to the online optimization algorithm that starts from some $\utility_1 \in \Utility$, then on every round $\timestep$ updates: (1) $\utility_{\timestep+1}' = \utility_\timestep -\eta \,l_t$ where $\eta >0$ is the learning rate; (2) sets $\utility_{\timestep+1}$ to be the Euclidean projection of $\utility_{\timestep+1}'$ onto $U$.
For a specific choice of $\eta$, OGD achieves regret bounded by $O(GD\sqrt{\Time})$ when $\|l_\timestep\| \le G$ for all $\timestep=1,\dots,\Time$ and $\max_{\utility,\utility' \in U} \|\utility-\utility'\| \le D$ (see, e.g., \cite{hazan2022introduction}).
Note that OGD operates in time complexity of $O(\Time (m + P))$, where $P$ is the time to compute the projection on $\Utility$.
Thus if $P$ is polynomial in the problem parameters, the OGD runs in polynomial time as well.

\crefalias{section}{appendix}


\end{document}